\newtheorem{theorem}{Theorem}
\newtheorem{example}{Example}
\newtheorem{corollary}{Corollary}
\newtheorem{lemma}{Lemma}
\newtheorem{definition}{Definition}
\newtheorem{proposition}{Proposition}
\def\psfancypar#1#2{\begingroup\def\par{\endgraf\endgroup\lineskiplimit=0pt}
               \setbox2=\hbox{\large\sc #2}
               \newdimen\tmpht \tmpht \ht2 \advance\tmpht by \baselineskip
               \font\hhuge=Times-Bold at \tmpht
               \setbox1=\hbox{{\hhuge #1}}
               \count7=\tmpht \count8=\ht1
               \divide\count8 by 1000 \divide\count7 by \count8
               \tmpht=.001\tmpht\multiply\tmpht by \count7
               \font\hhuge=Times-Bold at \tmpht
               \setbox1=\hbox{{\hhuge #1}}
               \noindent
                \hangindent1.05\wd1
               \hangafter=-2 {\hskip-\hangindent
               \lower1\ht1\hbox{\raise1.0\ht2\copy1}%
                \kern-0\wd1}\copy2\lineskiplimit=-1000pt}
\newcommand{\beq}{\begin{equation}}
\newcommand{\eeq}{\end{equation}}
\newcommand{\bqa}{\begin{eqnarray}}
\newcommand{\eqa}{\end{eqnarray}}
\newcommand{\bqn}{\begin{eqnarray*}}
\newcommand{\eqn}{\end{eqnarray*}}
\newcommand{\nn}{\nonumber}
\newcommand{\be}{\begin{enumerate}}
\newcommand{\ee}{\end{enumerate}}
\newcommand{\bi}{\begin{itemize}}
\newcommand{\ei}{\end{itemize}}
\newcommand{\bd}{\begin{description}}
\newcommand{\ed}{\end{description}}
\newcommand{\ba}{\begin{array}}
\newcommand{\ea}{\end{array}}
\newcommand{\bde}{\begin{definition}}
\newcommand{\ede}{\end{definition}}
\newcommand{\bex}{\begin{example}}
\newcommand{\eex}{\end{example}}
\def\boxit#1{\vbox{\hrule\hbox{\vrule\kern3pt
        \vbox{\kern3pt#1\kern3pt}\kern3pt\vrule}\hrule}}
\def\reals{ { {\rm  I \kern-0.15em R }  } }
\def\complex{ {\,{{\rm C} \kern-0.50em \raise0.20ex {  |}}\, }}
\def\0bf{{\bf 0}}
\def\1bf{{\bf 1}}
\def\2bf{{\bf 2}}
\def\3bf{{\bf 3}}
\def\4bf{{\bf 4}}
\def\5bf{{\bf 5}}
\def\6bf{{\bf 6}}
\def\7bf{{\bf 7}}
\def\8bf{{\bf 8}}
\def\9bf{{\bf 9}}
\def\xbf{{\bf x}}
\def\ybf{{\bf y}}
\def\xbf{{\bf x}}
\def\ybf{{\bf y}}
\def\Rbf{{\bf R}}
\def\Wbf{{\bf W}}
\def\Xbf{{\bf X}}
\def\Rxx{\Rbf_{\ssstyle X\kern-.1em X}}
\let\ssstyle=\scriptscriptstyle
\def\Kout{\setbox1=\hbox{\Huge\bf K}\hbox to
1.05\wd1{\hspace{.05\wd1}
\def\Sout{\setbox1=\hbox{\Huge\bf S}\hbox to 1.05\wd1{\hspace{.05\wd1}

\def\scalefig#1{\epsfxsize #1\textwidth}
\begin{document}
\title{Decentralized Data Reduction with Quantization Constraints}
\author{Ge Xu, Shengyu Zhu, \IEEEmembership{Student Member, IEEE}, and Biao Chen,\IEEEmembership{ Senior Member, IEEE}\thanks{G. Xu, S. Zhu and B. Chen are with the Department of Electrical Engineering and Computer Science, Syracuse University, Syracuse, NY, 13244. Email: gexu\{szhu05, bichen\}@syr.edu. }
\thanks{
The material in this paper was presented in part at the IEEE International Symposium on Information Theory, Boston, MA, July 2012 and the Asilomar Conference on Signals, Systems, and Computers, Monterey, CA, November 2013.}}

\maketitle

\begin{abstract}
A guiding principle for data reduction in statistical inference is the sufficiency principle. This paper extends the classical sufficiency principle to decentralized inference, i.e., data reduction needs to be achieved in a decentralized manner. We examine the notions of local and global sufficient statistics and the relationship between the two for decentralized inference under different observation models. We then consider the impact of quantization on decentralized data reduction which is often needed when communications among sensors are subject to finite capacity constraints. The central question we intend to ask is: if each node in a decentralized inference system has to summarize its data using a finite number of bits, is it still optimal to implement data reduction using global sufficient statistics prior to quantization? We show that the answer is negative using a simple example and proceed to identify conditions under which sufficiency based data reduction followed by quantization is indeed optimal. They include the well known case when the data at decentralized nodes are conditionally independent as well as a class of problems with conditionally dependent observations that admit conditional independence structure through the introduction of an appropriately chosen hidden variable.
\end{abstract}
\begin{keywords}
Decentralized inference, sufficiency principle, sufficient statistic, quantization.
\end{keywords}

\section{Introduction}
\IEEEPARstart{A}{}guiding principle for data reduction is the sufficiency principle \cite{fisher:tstatis,Casella:stata,EL:pt_esti}. A sufficient statistic is a function of the data, chosen so that it `should summarize the whole of the relevant information  supplied by the sample' \cite{fisher:tstatis}. A classical example is in binary hypothesis testing where the likelihood ratio can be shown to be a sufficient statistic of the unknown hypothesis, thus can be used instead of the raw data for subsequent decision making \cite{steve:detection}. Another example is the waveform channel with additive white Gaussian channel as often assumed in digital communications \cite{wozen:prin}. It can be easily established that the outputs of simple correlators (or equivalently, that of matched filters) form a sufficient statistic for the unknown input signals. In both examples, the original data, often of high or infinite dimensions, is reduced to low dimension statistics which greatly facilitate the subsequent inference. Indeed, the sufficiency principle has played a prominent role in designing various data processing methods for statistical inference and it encompasses numerous results that have been developed since Fisher's original work. The well-known Neyman-Fisher factorization theorem, for example, provides a systematic way for identifying sufficient statistics using the likelihood function.

This paper studies data reduction in decentralized inference and extends the sufficiency principle to systems where data reduction needs to be done {\em locally}. Decentralized inference refers to the decision making process involving multiple sensors \cite{Radner:team}. Each sensor summarizes its observation and sends a message to the fusion center, which makes the final decision based on the messages it receives. Sensor processing is independent of each other as each sensor has access only to its own data. Illustrated in Fig.~\ref{fig:parallel} is a two-sensor canonical model for decentralized inference where sensors are connected in parallel to a fusion center.
\begin{figure}
    \centerline{
      \psfrag{x1}[c][c]{\scriptsize$\mathbf{X}_1$}
        \psfrag{x2}[c][c]{\scriptsize$\mathbf{X}_2$}
        \psfrag{u1}[c][c]{\tiny$T_1(\Xbf_1)$}
        \psfrag{u2}[c][c]{\tiny$T_2(\Xbf_2)$}
        \psfrag{p}[c][c]{\footnotesize$p(\mathbf{x}|\theta)$}
        \psfrag{g1}[c][c]{\footnotesize$\mathbb{X}_1$}
        \psfrag{g2}[c][c]{\footnotesize$\mathbb{X}_2$}
        \psfrag{h}[c][c]{\footnotesize$h(\cdot)$}
        \psfrag{t}[c][c]{\footnotesize$\theta$}
        \psfrag{t2}[c][c]{\footnotesize$\hat{\theta}$}
        \scalefig{.4}\epsfbox{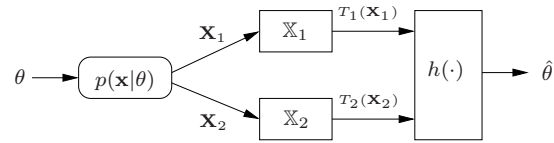}}
        \caption{A parallel network involving two peripheral sensors.}
        \label{fig:parallel}
\end{figure}

For decentralized inference, data reduction is done locally without access to the global data. Therefore, the contrasting notions of local sufficiency and global sufficiency need to be treated with care  \cite{RV:suff}. A sufficient statistic defined with respect to local data is referred to as a local sufficient statistic; if a collection of local statistics form a global sufficient statistic, they are said to be globally sufficient. For the special case when data are conditionally independent given the inference parameter, local sufficient statistics are known to be globally sufficient \cite{RV:suff,EB:fusion,Ishwar:rate}. However, for the general case when data are conditionally dependent, a set of local sufficient statistics need not be globally sufficient and vice versa. The first objective of this paper is to develop theories and tools for decentralized data reduction with conditionally dependent observations for parallel networks. We show that global sufficiency of local statistics is not determined solely by the statistical characterization of local data but also depends on the statistical property of the global data.

Sufficiency based data reduction ensures no loss of inference performance using the reduced data. While the sufficiency principle often results in maximum dimensionality reduction, communicating a one-dimensional real data may still be infeasible when communication is  subject to a finite capacity constraint. In this paper, we consider the simple case where each sensor node communicates only a finite number of bits to the fusion center,  i.e., it needs to summarize its observation using a finite number of bits. Directly quantizing the raw data, especially if the data is of high dimension and quantizers operate in a decentralized fashion, is often a formidable task \cite{Lam:DesignQ,John:Distri}. As such, it is often desirable to achieve maximum data reduction at each node prior to quantization. 

We are then led to the question: \textit{is it optimal to implement data reduction by forming a collection of global sufficient statistics followed by the design of optimal quantizers using the reduced data?} Alternatively, one can consider the sufficiency principle to be the ubiquitous principle for data reduction in a `lossless' sense, that is, complete information in the original data needs to be retained in the statistics. When practical constraints such as finite-bit quantization are imposed which result in inevitable loss of information, is sufficient principle still the guiding principle for data reduction?

Unfortunately, as seen from Example~\ref{exp:counter suff}, the answer to this question is negative in general. However, there exist known results where quantizing sufficient statistics is shown to be optimal. The classical example is distributed detection with conditionally independent observations where the local likelihood ratios form a set of global sufficient statistics. Indeed, Tsitsiklis established in \cite{Tsitsiklis:extremal} that likelihood ratio quantizers (LRQ's) are optimal for a broad class of performance criteria. There also exist instances where quantizing local sufficient statistics is globally optimal for certain parameter regimes in the dependent observation case \cite{Willet:good}. The second objective of this paper is thus to identify, for decentralized inference involving dependent data, conditions under which data reduction using sufficient statistics is still optimal when quantization is required at each node. While the result includes that of \cite{Tsitsiklis:extremal} as its special case, the approach differs from that of \cite{Tsitsiklis:extremal} as we do not start with an explicit form of quantizers thus can not explore the structural information of the statistics as that of \cite{Tsitsiklis:extremal}. Instead, our approach utilizes the Markovian structure implied in sufficient statistics. On the other hand, our optimality is strictly in the sense of minimizing a Bayesian risk as opposed to that of \cite{Tsitsiklis:extremal} which includes a broader class of performance criteria.

Preliminary results were reported in \cite{Ge:sufficiency} and \cite{Shengyu:quant}. In addition to expanding on the technical details of previous studies, the current paper introduces an alternative characterization of structurally optimal data reduction (cf. Section \ref{SC:generalcond}).

Also related to the present work is the quantizer design for distributed estimation in \cite{Lam:DesignQ} and \cite{John:Distri} where necessary conditions for optimal quantizers are derived. The present work does not explicitly address the quantizer design problem. Instead, we derive sufficient conditions such that sufficient statistics based data reduction followed by quantization is structurally optimal. Various optimal quantizer design approaches can then be applied to the reduced data which is often much more tractable than dealing with the raw data.

The rest of the paper is organized as follows. Section \ref{SC:Cen} reviews the basic sufficiency principle for centralized inference, including the optimality of sufficiency based data reduction when quantization is required. Section \ref{SC:sp} deals with data reduction in decentralized inference with conditionally dependent observations in the absence of a quantization constraint. In Section \ref{SC:Decen}, the sufficiency principle is re-examined in decentralized inference when quantization is necessary at each node, i.e., only a finite number of bits can be used to summarize the reduced data at each sensor. Both conditionally independent and conditionally dependent observations are considered. We establish the structural optimality of sufficiency based data reduction followed by quantizers for the independent case. For the dependent case, we identify a class of problems where we prove that sufficiency based data reduction is still optimal in the presence of quantizers. In Section \ref{SC:generalcond}, we obtain a sufficient condition under which the sufficiency based data reduction stills attains the same optimal inference as the raw data. It includes both the independence and dependence conditions as its special cases. Section \ref{conclusion} concludes the paper.

\section{Centralized inference}
\label{SC:Cen}
In this section, we consider a simple centralized inference system where the entire data is available at a single node. We  review the basic sufficiency principle for centralized inference and then establish the optimality of sufficiency based data reduction when quantization is required.
 \subsection{Sufficiency principle}
Suppose $\theta$ is the parameter of inference interest and $\Xbf\triangleq\{X_1,\cdots,X_n\}$ is a random vector observation collected at the node, whose distribution is given by $p(\xbf|\theta)
\footnote{We do not distinguish between probability density and probability mass functions. Its meaning will become clear in the context of specific problems.}$.
 The sufficiency principle states that a function (or statistic) of $\Xbf$, denoted by $T(\Xbf)$, is  a sufficient statistic for $\theta$ if the inference outcome does not change when either $\xbf$ or $\ybf$ is observed as long as $T(\xbf)=T(\ybf)$ \cite{Casella:stata}.
A useful tool to identify sufficient statistics is the Neyman-Fisher factorization theorem\cite{Casella:stata} which states that  a statistic  $T(\Xbf)$ is sufficient for $\theta$ if and only if there exist functions $g(t|\theta)$ and $h(\xbf)$ such that %
 \bqa p(\xbf|\theta)=g(T(\xbf)|\theta)h(\xbf).\label{NF factor}\eqa
If the parameter $\theta$ is itself random, the sufficiency principle can  be elegantly reframed using the data processing inequality, assisted with the use of Shannon's mutual information
\cite{Cover:Information}. That is, a function $T(\Xbf)$ is a sufficient statistic if and only if the following Markov chain holds
\bqa \label{eqn:mc for ss} \theta-T(\Xbf)-\Xbf,\eqa
which is equivalent to the mutual information equation
\bqa I(\theta;\Xbf)=I(\theta,T(\Xbf)).\eqa
The following lemma, which is used throughout the paper, is a straightforward result from the definition of Markov chain.
\begin{lemma}
\label{prob}
Let $\Xbf\sim p(\xbf|\theta)$ where $\theta$ is a random parameter. If $T(\Xbf)$ is a sufficient statistic for $\theta$ with respect to $\Xbf$, then
$$p(\theta|\mathbf{x})=p(\theta|T(\mathbf{x})).$$
\end{lemma}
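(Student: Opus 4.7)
The plan is to obtain the claim as an immediate corollary of the Markov-chain characterization of sufficiency recalled in equation~\eqref{eqn:mc for ss}. Since $T(\mathbf{X})$ is assumed sufficient for $\theta$, we have the Markov chain $\theta - T(\mathbf{X}) - \mathbf{X}$, which by definition means that $\theta$ and $\mathbf{X}$ are conditionally independent given $T(\mathbf{X})$. Translated into conditional densities (or p.m.f.'s), this is the pointwise identity
\[
p(\theta\mid T(\mathbf{x}),\mathbf{x}) \;=\; p(\theta\mid T(\mathbf{x})).
\]

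Next, I would exploit the fact that $T$ is a deterministic function: given $\mathbf{x}$, the value $T(\mathbf{x})$ is already determined, so the $\sigma$-algebra generated by $\mathbf{X}$ contains that generated by $T(\mathbf{X})$. Hence conditioning jointly on $(T(\mathbf{X}),\mathbf{X})$ is the same as conditioning on $\mathbf{X}$ alone, giving
\[
p(\theta\mid T(\mathbf{x}),\mathbf{x}) \;=\; p(\theta\mid \mathbf{x}).
\]
Chaining the two displayed equalities yields the desired conclusion $p(\theta\mid\mathbf{x})=p(\theta\mid T(\mathbf{x}))$.

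A slightly more hands-on alternative, if the Markov-chain form is not invoked directly, is to start from the Neyman--Fisher factorization \eqref{NF factor}, write Bayes' rule $p(\theta\mid\mathbf{x})=p(\mathbf{x}\mid\theta)p(\theta)/p(\mathbf{x})$, substitute $p(\mathbf{x}\mid\theta)=g(T(\mathbf{x})\mid\theta)h(\mathbf{x})$, and observe that the $h(\mathbf{x})$ factor cancels between numerator and denominator once the denominator is expanded via the law of total probability. What remains depends on $\mathbf{x}$ only through $T(\mathbf{x})$, and a second application of Bayes' rule (now with $T(\mathbf{X})$ in place of $\mathbf{X}$) identifies it as $p(\theta\mid T(\mathbf{x}))$.

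There is no real technical obstacle here; the statement is essentially a rephrasing of conditional independence in the language of posteriors. The only thing to be careful about is that the equality is asserted pointwise in $\mathbf{x}$, so strictly speaking it should be understood to hold for all $\mathbf{x}$ with $p(\mathbf{x})>0$ (or almost surely, in the measure-theoretic reading); this is a standard caveat and need not be belabored in the proof.
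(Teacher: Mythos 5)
Your proposal is correct and matches the paper's own argument: the paper likewise combines the sufficiency Markov chain $\theta - T(\mathbf{X}) - \mathbf{X}$ with the trivial chain $\theta - \mathbf{X} - T(\mathbf{X})$ (valid because $T$ is a function of $\mathbf{X}$) to write $p(\theta|\mathbf{x}) = p(\theta|\mathbf{x}, T(\mathbf{x})) = p(\theta|T(\mathbf{x}))$. Your alternative route via the Neyman--Fisher factorization is a fine extra but is not needed.
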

\begin{proof}
As $T(\Xbf)$ is a function of $\Xbf$, $\theta-\Xbf-T(\Xbf)$ form a Markov chain. Together with (2), we have $$p(\theta|\xbf)=p(\theta|\xbf,T(\xbf))=p(\theta|T(\xbf)).$$
\end{proof}

\subsection{Centralized inference with quantization}

Consider a centralized inference system in which quantization is required, as shown in Fig.~\ref{fig:centralss1}. Here, $\theta$ is the parameter of inference interest with distribution $p(\theta)$, $\mathbf{X}$ is the random vector observation, $\gamma(\cdot)$ is the quantizer directly operating on the data $\Xbf$ and the output of the quantizer is $U=\gamma(\mathbf{X}) \in \{0,\dots,L-1\}$ where $L$ is the number of possible outputs. The estimator at the fusion center is denoted by the function $h(\cdot)$ whose input is the quantizer output.

  Let $T(\mathbf{X})$ be any sufficient statistic for $\theta$. To establish the optimality of sufficiency based data reduction with a quantization constraint, we need to investigate whether the two systems in Fig.~\ref{fig:centralss} achieve the same optimal performance where the second system applies data reduction to obtain $T(\Xbf)$ prior to a quantization operation. The quantizer and estimator in Fig.~\ref{fig:centralss2} are similarly defined by $U'=\gamma'(T(\mathbf{X}))$ and $h'(U')$.  Note that for a centralized system there is no distinction between local and global sufficient statistics.
\begin{figure}
    \centering
    \subfigure[]
    {
        \psfrag{x}[c][c]{\scriptsize$\mathbf{X}$}
        \psfrag{u}[c][c]{\scriptsize$U$}
        \psfrag{p}[c][c]{\footnotesize$p(\mathbf{x}|\theta)$}
        \psfrag{g}[c][c]{\footnotesize$\gamma(\cdot)$}
        \psfrag{h}[c][c]{\footnotesize$h(\cdot)$}
        \psfrag{t}[rc][c]{\footnotesize$\theta$}
        \psfrag{t2}[c][c]{\footnotesize$\hat{\theta}$}
        \scalefig{.34}\epsfbox{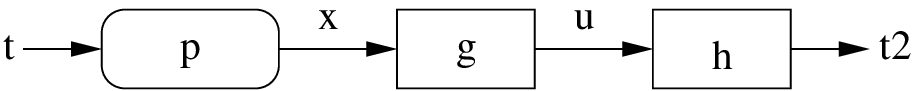}
        \label{fig:centralss1}
    }
    \\
    \subfigure[]
    {
        \psfrag{x}[c][c]{\scriptsize$\mathbf{X}$}
        \psfrag{tx}[c][c]{\scriptsize$T(\mathbf{X})$}
        \psfrag{u}[c][c]{\scriptsize$U'$}
        \psfrag{p}[c][c]{\footnotesize$p(\mathbf{x}|\theta)$}
        \psfrag{ss}[c][c]{\footnotesize$T(\cdot)$}
        \psfrag{g}[c][c]{\footnotesize$\gamma'(\cdot)$}
        \psfrag{h}[c][c]{\footnotesize$h'(\cdot)$}
        \psfrag{t}[rc][c]{\footnotesize$\theta$}
        \psfrag{t2}[c][c]{\footnotesize$\hat{\theta}'$}
        \scalefig{.45}\epsfbox{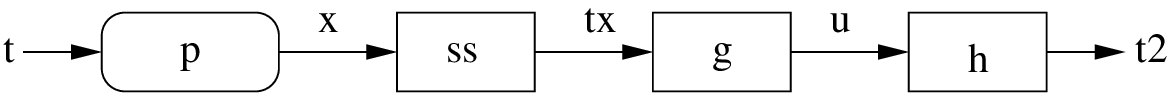}
        \label{fig:centralss2}
    }
    \caption{Centralized inference systems with quantizers operating on (a) the raw data $\Xbf$, (b) the sufficient statistic $T(\Xbf)$.}
    \label{fig:centralss}
\end{figure}

Let $d[\theta,\hat{\theta}]$ be a given cost function between the parameter $\theta$ and the estimator output $\hat{\theta}$. For the model in Fig.~\ref{fig:centralss1}, $\hat{\theta}=h(U)=h(\gamma(\mathbf{X}))$. The Bayesian risk is the expected cost function given by
\bqa
\label{eqn:risk11} 
R=E\{d[\theta,h( \gamma(\Xbf))]\},
\eqa
where the expectation is taken with respect to both the random parameter $\theta$ and the observation $\Xbf$. For the model in Fig.~\ref{fig:centralss2}, $\hat{\theta}'=h'(\gamma'(T(\Xbf))$ and
the Bayesian risk is given by
\bqa
\label{eqn:risk12}
R'=E\{d[\theta,h'(\gamma'(T(\Xbf))]\},
\eqa
where again the expectation is taken with respect to $\theta$ and $\Xbf$. We now establish that the system described in Fig.~\ref{fig:centralss2} is structurally optimal, i.e., it can achieve the same inference performance as that of Fig.~\ref{fig:centralss1}, hence quantizing the sufficient statistic achieves the same minimum Bayesian risk as quantizing the observation in centralized inference.
\begin{theorem}
\label{thmcen}
For the Bayesian risks in (\ref{eqn:risk11}) and (\ref{eqn:risk12}), $$\min_{\gamma,h}R=\min_{\gamma',h'}R'.$$
\end{theorem}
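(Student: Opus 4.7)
The plan is to prove the equality by establishing the two inequalities $\min R \leq \min R'$ and $\min R \geq \min R'$ separately. The first is straightforward: given any quantizer/estimator pair $(\gamma', h')$ on $T(\Xbf)$, setting $\gamma(\xbf) := \gamma'(T(\xbf))$ and $h := h'$ yields a valid pair operating on the raw data $\Xbf$ with identical output $U = U'$ almost surely, hence identical risk. Thus any risk achievable on the right side is achievable on the left, giving $\min R \leq \min R'$.

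The main direction is $\min R \geq \min R'$. I would fix an arbitrary admissible $(\gamma, h)$ with risk $R$ and construct $(\gamma', h')$ on $T(\Xbf)$ achieving a risk no larger than $R$. The heart of the argument is the Markov chain $\theta - T(\Xbf) - \Xbf$ guaranteed by sufficiency, which by Lemma~\ref{prob} implies $p(\theta|\xbf) = p(\theta|T(\xbf))$; equivalently, $\theta$ and $\Xbf$ are conditionally independent given $T(\Xbf)$. I would then compute the conditional risk given $T(\Xbf) = t$ by conditioning on the value $u = \gamma(\Xbf)$:
\begin{align*}
E[d(\theta, h(\gamma(\Xbf)))|T(\Xbf) = t]
= \sum_{u=0}^{L-1} & P(\gamma(\Xbf) = u|T(\Xbf) = t) \\
& \times \int d(\theta, h(u))\, p(\theta|t)\, d\theta,
\end{align*}
where the conditional independence (through sufficiency) lets me drop $\{\gamma(\Xbf) = u\}$ from the inner conditioning on $\theta$. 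Since the right-hand side is a convex combination of the values $c_u(t) := \int d(\theta, h(u)) p(\theta|t)\, d\theta$, it is bounded below by $\min_u c_u(t)$.

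I would then define $\gamma'(t) := \argmin_{u \in \{0,\ldots,L-1\}} c_u(t)$ (a measurable selection exists since the index set is finite) and $h' := h$. For this pair, $E[d(\theta, h'(\gamma'(T(\Xbf))))|T(\Xbf) = t] = \min_u c_u(t)$, which is no larger than $E[d(\theta, h(\gamma(\Xbf)))|T(\Xbf) = t]$. Taking expectation over $T(\Xbf)$ yields $R' \leq R$, and minimizing over the original $(\gamma, h)$ gives $\min_{\gamma',h'} R' \leq \min_{\gamma,h} R$, closing the argument.

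The main obstacle is handling the conditional-expectation manipulation carefully: specifically, justifying that conditioning on $\{\gamma(\Xbf) = u\}$ in addition to $\{T(\Xbf) = t\}$ does not alter the distribution of $\theta$. This is exactly where sufficiency enters, via the conditional independence $\theta \perp \Xbf \mid T(\Xbf)$ (equivalently the Markov chain in (\ref{eqn:mc for ss})), so the step is conceptually clean once Lemma~\ref{prob} is in hand. Measurability of $\gamma'$ and existence of the minimizer reduce to standard issues because $u$ lives in a finite alphabet.
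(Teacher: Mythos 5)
Your proposal is correct and follows essentially the same route as the paper: both arguments rest on Lemma~\ref{prob} (i.e., $p(\theta|\xbf)=p(\theta|T(\xbf))$, equivalently the Markov chain $\theta-T(\Xbf)-\Xbf$, which also gives $\theta-T(\Xbf)-\gamma(\Xbf)$) to show that the posterior cost of emitting a given $u$ depends on $\xbf$ only through $T(\xbf)$, so the pointwise-optimal quantizer factors through the sufficient statistic. The only difference is cosmetic: you improve an arbitrary $(\gamma,h)$ by conditioning on $T(\Xbf)=t$ and a convex-combination bound, whereas the paper states the same pointwise-minimization property as a necessary condition on the optimal $\gamma^*$; your version has the minor advantage of not presupposing that the optimum is attained.
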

\begin{proof}
Let $$R_\mathrm{min}=\min_{\gamma,h}R.$$ Denote by $\gamma^*(\cdot)$ and $h^*(\cdot)$ the optimal quantizer and estimator that achieve $R_\mathrm{min}$ is achieved. Apparently, $R'\geq R_\mathrm{min}$ as one can always define a new quantizer
$\gamma(\Xbf)=\gamma'(T(\Xbf))$ for any given $\gamma'(\cdot)$, thus converting any system described by Fig.~\ref{fig:centralss2} to that of Fig.~\ref{fig:centralss1} whose performance is no better than $R_{\mathrm{min}}$. Then we only need to show that there exist $\gamma'(\cdot)$ and $h'(\cdot)$ such that the corresponding $R'=R_\mathrm{min}$.

Expanding $R$ in (\ref{eqn:risk11}) with respect to the observation $\mathbf{X}$, we have

\begin{align}
\label{eqn:errorcentral}
R&= \int_{\theta}\int_{\mathbf{x}}d[\theta,h(\gamma(\xbf))]p(\mathbf{x},\theta)d\mathbf{x}d\theta\nn\\
&= \int_{\mathbf{x}}\int_{\theta}d[\theta,h(\gamma(\xbf))]p(\theta|\mathbf{x})p(\mathbf{x})d\theta d\mathbf{x}\nn\\
&\stackrel{(a)}{=} \int_{\mathbf{x}}\int_{\theta}d[\theta,h(\gamma(\xbf))]p(\theta|T(\mathbf{x}))p(\mathbf{x})d\theta d\mathbf{x}\nn\\
&\triangleq \int_{\mathbf{x}}\alpha(u,\mathbf{x})p(\mathbf{x})d\mathbf{x},
\end{align}
where
\begin{align}
u&\triangleq\gamma(\xbf),\nn\\
\alpha(u,\mathbf{x})&\triangleq\int_{\theta}d[\theta,h(u)]p(\theta|T(\mathbf{x}))d\theta,\nn
\end{align}
and ($a$) is from Lemma \ref{prob}. From (\ref{eqn:errorcentral}), $\gamma^*(\cdot)$ must be such that it chooses $u$ to minimize $\alpha(u,\xbf)$ with $h^*(\cdot)$ used as the estimator, that is
\bqa
\label{eqn:quandec}
U=\gamma^*(\mathbf{X})=\arg\min\limits_{u}\alpha(u,\mathbf{X}).\
\eqa
Therefore, $u=\gamma^*(\mathbf{x})=s \in \{0,\dots,L-1\}$ given the observation $\mathbf{x}$ if for any $t \in \{0,\dots,L-1\}$,
\begin{align}
\label{eqn:centralcondition}
0 &\geq\alpha(s,\mathbf{x})-\alpha(t,\mathbf{x})\nn\\
&=\int_{\theta}\{d[\theta,h^*(s)]-d[\theta,h^*(t)]\}p(\theta|T(\mathbf{x}))d\theta.
\end{align}
Given $h^*(\cdot)$, (\ref{eqn:centralcondition}) is a necessary condition for $\gamma^*(\cdot)$ to achieve $R_\mathrm{min}$. Note that (\ref{eqn:centralcondition}) depends on $\Xbf$ only through $T(\Xbf)$, hence can be realized by a $\gamma'(T({\Xbf}))$. If we use this $\gamma'(\cdot)$ together with $h'(\cdot)=h^*(\cdot)$, then $R'=R_\mathrm{min}$.
\end{proof}

The above result is not surprising in view of the fact that a sufficient statistic captures all the information about $\theta$ contained in the data. Indeed, the above theorem can be viewed as a simple instantiation of the sufficiency principle for the Bayesian risk. Other inference objective functions can also be used. Consider for example the ``indirect rate distortion problem" \cite{HS:indirect} where a noisy version of a source sequence is observed at the encoder while the decoder tries to  minimize the end-to-end distortion subject to a rate constraint between the encoder and the decoder. It was shown in \cite{KW:rateloss} that data reduction using a sufficient statistic at the encoder does not affect the rate distortion function.

In decentralized inference, however, the same statement is not necessarily true, i.e., sufficient statistics based data reduction may not be optimal when quantization is required at individual nodes. Before we study the impact of quantizers on data reduction in a decentralized system, we first revisit the sufficiency principle when no quantization is required. In particular, we strive for a better understanding of the relationship between local and global sufficient statistics under various dependent models.


\section{Sufficient Statistics in Decentralized Inference}
\label{SC:sp}
This section considers the decentralized data reduction in a two-sensor parallel network  as illustrated in Fig.~\ref{fig:parallel}. The results extend naturally to the case with an arbitrary number of sensors. Let $\theta\sim p(\theta)$ be the parameter of interest and $\Xbf_i$ the local observation at sensor $i$ for $i=1,2$.
 For a decentralized system, there is a need to distinguish between the notions of local and global sufficient statistics \cite{RV:suff}. 
When $\theta$ is random,  for $i=1,2$, $T_i(\Xbf_i)$ is a local sufficient statistic if
\bqa
\label{eqn:markov1}
\theta-T_i(\Xbf_i)-\Xbf_i
\eqa
form a Markov chain,  i.e., sufficiency is defined with respect to the local observation $\Xbf_i$. On the other hand, we call $(T_1(\Xbf_1),T_2(\Xbf_2))$ a global sufficient statistic if
\bqa
\label{eqn:markov2}
\theta - (T_1(\Xbf_1), T_2(\Xbf_2))- (\Xbf_1,\Xbf_2)
\eqa
form a Markov chain. It is apparent that for the general case, the two individual Markov chains (\ref{eqn:markov1}) and (\ref{eqn:markov2}) do not imply each other.

\subsection{Conditionally Independent Observations}
For the conditional independence case, it can be easily established that local sufficiency implies global sufficiency \cite{RV:suff,EB:fusion,Ishwar:rate}. The converse also holds for the conditional independence case, which is given in the following proposition.
\begin{proposition}
Let $\Xbf_1$ and $\Xbf_2$ be conditionally independent observations given the random parameter $\theta$. If $(T_1(\Xbf_1), T_2(\Xbf_2))$ form a global sufficient statistic for $\theta$, then both $T_1(\Xbf_1)$ and $T_2(\Xbf_2)$ are respectively local sufficient statistics with respect to the observations $\Xbf_1$ and $\Xbf_2$. 
\end{proposition}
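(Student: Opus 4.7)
The plan is to prove the converse by combining the Neyman-Fisher factorization theorem with the conditional independence structure. Since $(T_1(\Xbf_1),T_2(\Xbf_2))$ is a global sufficient statistic for $\theta$, Neyman-Fisher applied to the joint distribution gives a factorization
\begin{equation*}
p(\xbf_1,\xbf_2|\theta)=g\bigl(T_1(\xbf_1),T_2(\xbf_2)\,\big|\,\theta\bigr)\,h(\xbf_1,\xbf_2).
\end{equation*}
The conditional independence assumption supplies a second factorization of the same joint density, namely $p(\xbf_1,\xbf_2|\theta)=p(\xbf_1|\theta)\,p(\xbf_2|\theta)$. The idea is to equate these two expressions and then ``freeze'' the $\xbf_2$ coordinate in order to isolate the marginal $p(\xbf_1|\theta)$ and expose a Neyman-Fisher factorization for it alone.

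Concretely, I would pick any fixed $\xbf_2^0$ for which $p(\xbf_2^0|\theta)>0$ for every $\theta$ in the parameter space, and evaluate the equality of the two factorizations at $(\xbf_1,\xbf_2^0)$. Solving for $p(\xbf_1|\theta)$ yields
\begin{equation*}
p(\xbf_1|\theta)=\underbrace{\frac{g\bigl(T_1(\xbf_1),T_2(\xbf_2^0)\,\big|\,\theta\bigr)}{p(\xbf_2^0|\theta)}}_{\tilde g(T_1(\xbf_1)\,|\,\theta)}\,\underbrace{h(\xbf_1,\xbf_2^0)}_{\tilde h(\xbf_1)}.
\end{equation*}
The first factor depends on $\xbf_1$ only through $T_1(\xbf_1)$ and the second is free of $\theta$, so applying the Neyman-Fisher theorem in the reverse direction certifies $T_1(\Xbf_1)$ as a local sufficient statistic for $\theta$ with respect to $\Xbf_1$. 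An identical argument with the roles of the sensors swapped (freezing $\xbf_1$ instead) delivers the analogous conclusion for $T_2(\Xbf_2)$.

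I expect the only delicate step to be the choice of the anchor point $\xbf_2^0$: one needs a value lying in the common support of $p(\cdot|\theta)$ across all $\theta$ so that the division by $p(\xbf_2^0|\theta)$ is well defined. Under the usual mild regularity assumption that the family $\{p(\xbf_2|\theta)\}$ has a common support (which is standard whenever Neyman-Fisher is invoked), such an $\xbf_2^0$ exists. Alternatively, one can avoid the pointwise division altogether by phrasing the argument in terms of the Markov chain $\theta-(T_1,T_2)-(\Xbf_1,\Xbf_2)$ of Lemma~\ref{prob} together with $T_1\perp T_2\,|\,\theta$ (which follows from $\Xbf_1\perp\Xbf_2\,|\,\theta$ since $T_i$ is a function of $\Xbf_i$), and checking that the ratio $p(\xbf_1|\theta)/p(T_1(\xbf_1)|\theta)$ is forced to be $\theta$-free, which again triggers Neyman-Fisher.
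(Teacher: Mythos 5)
Your proof is correct in substance but takes a genuinely different route from the paper. The paper argues entirely with conditional-independence (graphoid) axioms on Markov chains: from $\Xbf_1-\theta-\Xbf_2$ it derives $\Xbf_1-(\theta,T_1(\Xbf_1))-\Xbf_2$ by weak union, extracts $\Xbf_1-(T_1(\Xbf_1),T_2(\Xbf_2))-\theta$ from global sufficiency by decomposition and symmetry, and then applies the \emph{intersection} property to conclude $\Xbf_1-T_1(\Xbf_1)-(\theta,T_2(\Xbf_2))$. You instead equate the Neyman--Fisher factorization of the joint likelihood with the product form $p(\xbf_1|\theta)p(\xbf_2|\theta)$ and freeze $\xbf_2$ at an anchor point $\xbf_2^0$ to exhibit a factorization of the marginal $p(\xbf_1|\theta)$. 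Both arguments hinge on essentially the same regularity caveat, just in different guises: the paper's intersection step is valid only where $p(\xbf_1,T_1(\xbf_1),T_2(\xbf_2),\theta)>0$ (the paper says so explicitly), while your division by $p(\xbf_2^0|\theta)$ requires a point in the common support across all $\theta$, which you correctly flag. Your approach buys a more concrete, computation-friendly certificate (an explicit factorization of $p(\xbf_1|\theta)$) and works naturally for non-random $\theta$ as well, whereas the paper's axiomatic route avoids the measure-theoretic delicacy of evaluating an a.e.\ identity at a single slice $\xbf_2=\xbf_2^0$ (a Fubini-type argument is needed to ensure a good anchor exists for continuous data) and composes cleanly with the other Markov-chain manipulations used throughout the paper. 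Your closing sketch of an alternative Markov-chain phrasing is essentially gesturing at the paper's actual proof, but as written it is not developed enough to stand on its own.
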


We first state some useful properties of Markov chains  \cite{Pearl:causality} that will be used for subsequent proofs:
\begin{itemize}
\item Symmetry: $X-Z-Y\Rightarrow Y-Z-X$;
\item Decomposition: $X-Z-YW\Rightarrow X-Z-Y$;
\item Weak Union:$X-Z-YW\Rightarrow X-ZW-Y$;
\item Contraction: $X-Z-Y$ and $X-ZY-W \Rightarrow X-Z-YW$;
\item Intersection: $X-ZW-Y$ and $X-ZY-W \Rightarrow X-Z-YW$.
\end{itemize}

\begin{proof} 
Since $\Xbf_1$ and $\Xbf_2$ are independent given $\theta$, $\Xbf_1-\theta-\Xbf_2$ form a Markov chain and so does $(\Xbf_1, T_1(\Xbf_1))-\theta-\Xbf_2$ as $T_1(\Xbf_1)$ is a function of $\Xbf_1$. Using the weak union property, we have that $\Xbf_1-(\theta, T_1(\Xbf_1))-\Xbf_2$ form a Marokov chain. That $(T_1(\Xbf_1), T_2(\Xbf_2))$ is globally sufficient implies that (\ref{eqn:markov2}) holds and thus $\Xbf_1-(T_1(\Xbf_1), T_2(\Xbf_2))-\theta$ form a Markov chain according to the decomposition and symmetry properties. Combining $\Xbf_1-(\theta, T_1(\Xbf_1))-\Xbf_2$ and $\Xbf_1-(T_1(\Xbf_1), T_2(\Xbf_2))-\theta$, and using the intersection property we get the Markov chain $\Xbf_1-T_1(\Xbf_1)-(\theta, T_2(\Xbf_2))$ whenever $p(\xbf_1,T_1(\xbf_1),T_2(\xbf_2),\theta)$ is positive. Thus $T_1(\Xbf_1)$ is a local sufficient statistic for $\theta$. That $T_2(\Xbf_2)$ is locally sufficient for $\theta$ can be established similarly.
\end{proof}

\subsection{Conditionally Dependent Observations}
While the above establishes that global and local sufficient statistics imply each other for conditionally independent observations, the same is not true for the dependent case. Consider the following trivial example.
\begin{example}
Let $\Xbf_1=\Xbf_2$ in Fig.~\ref{fig:parallel}. It is clear that $(T_1(\Xbf_1)=\Xbf_1,T_2(\Xbf_2)=\varnothing)$ is globally sufficient for $\theta$ while $T_2(\Xbf_2)=\varnothing$ is not locally sufficient.
\end{example}


The rest of this section is devoted to the question of how to identify global sufficient statistics at distributed nodes with conditionally dependent observations. Our approach leverages a recently proposed hierarchical conditional independence (HCI) model, which is a new framework developed for distributed detection with conditionally dependent observations \cite{Hchen:newframework}. 
An HCI model is constructed by introducing a hidden variable $\Wbf$ such that the following Markov chains hold: 
 \bqa \begin{array}{ll} \label{HCI}
&\Xbf_1-\Wbf-\Xbf_2,\\
&\theta-\Wbf-(\Xbf_1,\Xbf_2).
\end{array}\eqa
That is, $\Wbf$ induces conditional independence between $\Xbf_1$ and $\Xbf_2$ as well as conditional independence between the inference parameter $\theta$ and the sensor observations $(\Xbf_1,\Xbf_2)$. It was established in \cite{Hchen:newframework} that any general distributed inference model is equivalent to an HCI model and vice versa. We notice here that while we only illustrate the HCI model using the two sensor system, the framework is applicable to that involving any arbitrary number of sensors where we replace the Markov chain $\Xbf_1-\Wbf-\Xbf_2$ with the equivalent conditional independence assumption.

Notice that the second Markov chain in defining the HCI model implies that the information about the inference parameter $\theta$ in the data $(\Xbf_1,\Xbf_2)$ is preserved entirely in $\Wbf$. This is formalized in the following lemma.


\begin{lemma}\label{lemma 1}
Let $\Xbf_1,\Xbf_2\sim p(\xbf_1,\xbf_2|\theta)$ and suppose that there exists a random variable $\Wbf$ such that
\bqa \theta-\Wbf-(\Xbf_1,\Xbf_2).\label{eq 1}\eqa
A statistic $T(\Xbf_1,\Xbf_2)$ that is sufficient  for $\Wbf$ is also sufficient for $\theta$.
\end{lemma}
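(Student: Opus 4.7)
The plan is to derive the Markov chain $\theta-T(\Xbf_1,\Xbf_2)-(\Xbf_1,\Xbf_2)$ from the two given chains by successively applying the graphoid properties (symmetry, decomposition, weak union, contraction) already listed above. For brevity, write $\Xbf=(\Xbf_1,\Xbf_2)$ and $T=T(\Xbf)$. The hypotheses are $\theta-\Wbf-\Xbf$ (the HCI property) and $\Wbf-T-\Xbf$ (sufficiency of $T$ for $\Wbf$), and the goal is to show $\theta-T-\Xbf$.

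First I would inject $T$ into both chains. Since $T=T(\Xbf)$ is a deterministic function of $\Xbf$, the chain $\theta-\Wbf-(\Xbf,T)$ holds trivially; weak union then yields $\theta-(\Wbf,T)-\Xbf$, and symmetry rewrites this as $\Xbf-(\Wbf,T)-\theta$. Separately, symmetry applied to $\Wbf-T-\Xbf$ gives $\Xbf-T-\Wbf$.

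Next, contraction combines $\Xbf-T-\Wbf$ and $\Xbf-(T,\Wbf)-\theta$ into $\Xbf-T-(\Wbf,\theta)$; decomposition drops $\Wbf$ to produce $\Xbf-T-\theta$; one more application of symmetry yields the desired $\theta-T-\Xbf$, which by \eqref{eqn:mc for ss} is precisely the statement that $T$ is sufficient for $\theta$.

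The main obstacle is identifying the correct order of axiom applications; in particular, recognizing that one must simultaneously inject $T$ into the conditioning set of the HCI chain (via weak union) and retain $\Wbf$ as a bridging variable before contraction becomes applicable. Once this bridging step is in place, the remainder is mechanical. Notably the argument does not invoke the intersection property, so no positivity hypothesis on the joint distribution is required, in contrast with the earlier proof of the converse for conditionally independent observations.
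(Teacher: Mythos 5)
Your proof is correct and follows essentially the same route as the paper: the paper combines the chains $\theta-\Wbf-(\Xbf_1,\Xbf_2,T)$ and $\Wbf-T-(\Xbf_1,\Xbf_2)$ into the long chain $\theta-\Wbf-T-(\Xbf_1,\Xbf_2)$ and declares this step ``straightforward,'' and your weak-union/contraction/decomposition sequence is precisely the omitted verification. Your closing remark that the intersection property (and hence positivity) is not needed, unlike in the proof of Proposition~1, is accurate.
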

\begin{proof}
The Markov chain (\ref{eq 1}) implies that  $\theta-\Wbf-(\Xbf_1,\Xbf_2,T(\Xbf_1,\Xbf_2))$  forms a Markov chain for any statistics $T(\Xbf_1,\Xbf_2)$.
That $T(\Xbf_1,\Xbf_2)$ is sufficient  for $\Wbf$ implies the Markov chain $\Wbf-T(\Xbf_1,\Xbf_2)-(\Xbf_1,\Xbf_2)$.
It is straightforward to show that these two Markov chains give rise to a  long Markov chain
\bqn\theta-\Wbf-T(\Xbf_1,\Xbf_2)-(\Xbf_1,\Xbf_2).\eqn
Therefore,  $T(\Xbf_1,\Xbf_2)$ is sufficient for $\theta$.
\end{proof}

Lemma \ref{lemma 1} is not useful in itself as $T(\Xbf_1,\Xbf_2)$ is a function of the global data which is not available in either of the nodes. Its use is mainly for establishing the following result.

\begin{theorem}\label{theorem 1}
Let $\Xbf_1,\Xbf_2\sim p(\xbf_1,\xbf_2|\theta)$ and suppose there exists a random variable $\Wbf$ such that
$\theta-\Wbf-(\Xbf_1,\Xbf_2)$.
Let $T(\Wbf)$ be a sufficient statistic for $\theta$, i.e., $\theta-T(\Wbf)-\Wbf$.
\begin{enumerate}
\item If a pair of statistics $(T_1(\Xbf_1),T_2(\Xbf_2))$ are globally sufficient  for $T(\Wbf)$,
they are globally sufficient  for $\theta$.

\item If $T(\Wbf)$ induces conditional independence between $\Xbf_1$ and $\Xbf_2$ and $(T_1(\Xbf),T_2(\Xbf_2))$ are locally sufficient for $T(\Wbf)$, then $(T_1(\Xbf_1),T_2(\Xbf_2))$ are globally sufficient for $\theta$.
\end{enumerate}
\end{theorem}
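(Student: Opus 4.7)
The plan is to treat $T(\Wbf)$ itself as a surrogate hidden variable so that Lemma~\ref{lemma 1} and the classical independent-observation fact recalled at the start of Section~\ref{SC:sp} both apply directly. Both parts rest on the auxiliary Markov chain $\theta-T(\Wbf)-(\Xbf_1,\Xbf_2)$, which I would establish once and for all before splitting into the two cases.

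To obtain the auxiliary chain, I combine the two given chains $\theta-T(\Wbf)-\Wbf$ and $\theta-\Wbf-(\Xbf_1,\Xbf_2)$. Because $T(\Wbf)$ is a function of $\Wbf$, conditioning on $\Wbf$ is the same as conditioning on $(T(\Wbf),\Wbf)$, so the second chain is equivalent to $\theta-(T(\Wbf),\Wbf)-(\Xbf_1,\Xbf_2)$. Together with $\theta-T(\Wbf)-\Wbf$, the contraction property listed in Section~\ref{SC:sp} yields $\theta-T(\Wbf)-(\Wbf,\Xbf_1,\Xbf_2)$, and decomposition then gives $\theta-T(\Wbf)-(\Xbf_1,\Xbf_2)$. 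In other words, $T(\Wbf)$ inherits from $\Wbf$ the HCI screening property with respect to $\theta$.

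For part~1, the auxiliary chain is precisely what is needed to invoke Lemma~\ref{lemma 1} with $T(\Wbf)$ substituted for $\Wbf$ and with the joint statistic $(T_1(\Xbf_1),T_2(\Xbf_2))$ playing the role of $T(\Xbf_1,\Xbf_2)$. The hypothesis of part~1 says exactly that $(T_1(\Xbf_1),T_2(\Xbf_2))$ is sufficient for $T(\Wbf)$, so (the reapplied) Lemma~\ref{lemma 1} delivers sufficiency for $\theta$, i.e., $(T_1(\Xbf_1),T_2(\Xbf_2))$ is globally sufficient for $\theta$.

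For part~2, the strategy is to reduce to the hypothesis of part~1, i.e., to show $T(\Wbf)-(T_1(\Xbf_1),T_2(\Xbf_2))-(\Xbf_1,\Xbf_2)$. This is nothing but the classical statement that local sufficiency implies global sufficiency under conditional independence, applied with $T(\Wbf)$ in the role of the inference parameter: the added assumption $\Xbf_1-T(\Wbf)-\Xbf_2$ supplies the required conditional independence, and the two hypotheses $T(\Wbf)-T_i(\Xbf_i)-\Xbf_i$ supply the local sufficiencies. Once this chain is in hand, part~1 immediately finishes the proof. The only substantive manipulation in the whole argument is the contraction/decomposition step that fuses the two defining HCI Markov chains with the sufficiency of $T(\Wbf)$ for $\theta$; everything else is bookkeeping about which variable temporarily plays ``parameter'' and which plays ``hidden variable,'' and I expect no hidden difficulty beyond this relabeling.
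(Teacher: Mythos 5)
Your proposal is correct and follows essentially the same route as the paper: both establish the auxiliary chain $\theta-T(\Wbf)-(\Xbf_1,\Xbf_2)$ from the two given chains via the contraction/decomposition properties, then invoke Lemma~\ref{lemma 1} with $T(\Wbf)$ in the role of the hidden variable for part~1, and reduce part~2 to part~1 using the fact that local sufficiency implies global sufficiency under conditional independence. The only difference is the immaterial ordering in which the graphoid axioms are applied to derive the auxiliary chain.
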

\begin{proof} To prove 1), from Lemma \ref{lemma 1}, we  only need to show that $\theta-T(\Wbf)-(\Xbf_1,\Xbf_2)$ holds. Note first that $T(\Wbf)-(\theta,\Wbf)-(\Xbf_1,\Xbf_2)$ form a Markov chain as $T(\Wbf)$ is a function of $\Wbf$ . Together with $\theta-\Wbf-(\Xbf_1,\Xbf_2)$ we obtain the Markov chain $(\theta,T(\Wbf))-\Wbf-(\Xbf_1,\Xbf_2)$ using the contraction property.
Combined with the Markov chain $\theta-T(\Wbf)-\Wbf$, we get $\theta-T(\Wbf)-\Wbf-(\Xbf_1,\Xbf_2)$ which implies $\theta-T(\Wbf)-(\Xbf_1,\Xbf_2)$.

To prove 2), since conditional independence ensures that local sufficient statistics are globally sufficient, $(T_1(\Xbf_1),T_2(\Xbf_2))$ are thus sufficient for $T(\Wbf)$. The result in 1) thus establishes that they are also sufficient for $\theta$.
\end{proof}

Applying Theorem \ref{theorem 1} to the HCI model, we have the following corollary.

\begin{corollary}
\label{cor1}
For an HCI model, local sufficiency with respect to the hidden variable implies global sufficiency.
\end{corollary}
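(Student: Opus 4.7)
The plan is to derive this corollary as a direct specialization of Theorem \ref{theorem 1}, part 2), by taking the sufficient statistic of $\Wbf$ to be $\Wbf$ itself, i.e., $T(\Wbf)=\Wbf$. First I would verify that the hypotheses of Theorem \ref{theorem 1} are met for an HCI model under this choice. The second Markov chain in the HCI definition, $\theta-\Wbf-(\Xbf_1,\Xbf_2)$, is exactly the assumption required by Theorem \ref{theorem 1}. The condition that $T(\Wbf)$ be a sufficient statistic for $\theta$ becomes trivial when $T(\Wbf)=\Wbf$, since the Markov chain $\theta-\Wbf-\Wbf$ holds vacuously.

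Next I would check the additional hypothesis specific to part 2) of Theorem \ref{theorem 1}: that $T(\Wbf)$ induces conditional independence between $\Xbf_1$ and $\Xbf_2$. With $T(\Wbf)=\Wbf$ this becomes the first defining Markov chain of the HCI model, $\Xbf_1-\Wbf-\Xbf_2$, which holds by construction. Finally, the assumption of the corollary, that $T_1(\Xbf_1)$ and $T_2(\Xbf_2)$ are locally sufficient with respect to the hidden variable $\Wbf$, is precisely the remaining hypothesis needed, namely that the $(T_1(\Xbf_1),T_2(\Xbf_2))$ are locally sufficient for $T(\Wbf)$.

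Having matched all of the hypotheses, the conclusion of Theorem \ref{theorem 1} part 2) then states that $(T_1(\Xbf_1),T_2(\Xbf_2))$ is globally sufficient for $\theta$, which is exactly the statement of the corollary. Since this is a one-line instantiation of an already-proved theorem, there is no real obstacle to overcome; the only thing worth emphasizing in the write-up is why the choice $T(\Wbf)=\Wbf$ is legitimate (that is, why $\Wbf$ can play the role of its own sufficient statistic), and why the two Markov chains in the HCI definition line up exactly with the two structural hypotheses of Theorem \ref{theorem 1} part 2).
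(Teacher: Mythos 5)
Your proposal is correct and matches the paper's intended derivation: the paper presents Corollary~\ref{cor1} simply as ``applying Theorem~\ref{theorem 1} to the HCI model,'' and the instantiation $T(\Wbf)=\Wbf$ in part 2) is exactly the right way to make that precise, since the two defining Markov chains of the HCI model in (\ref{HCI}) then supply the two structural hypotheses of the theorem and the trivial chain $\theta-\Wbf-\Wbf$ supplies the third. No gaps.
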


Corollary 1 suggests that a way to obtain global sufficient statistics at individual nodes is to ensure local sufficiency of the statistics with respect to the hidden variable $\Wbf$ in the HCI model. As we shall illustrate, the approach is meaningful only if the hidden variable $\Wbf$ is chosen appropriately. For example, choosing $\Wbf=(\Xbf_1,\Xbf_2)$ ensures that the Markov chains used in defining the HCI model in (8) are always satisfied, yet it does not lead to any meaningful data reduction. 
\subsection{Examples}
\label{SC:exp}
We now use a simple example to show how Corollary 1 can be used for data reduction through an appropriately chose $\Wbf$.



\begin{example}
\label{exp:globalsuff}
For $i=1,\cdots,n$, let
\bqa
X_{1i}&=&\theta+Z+U_i,\nn\\
X_{2i}&=&\theta+Z+V_i,\nn
\eqa
where $\theta, Z, U_1, \cdots, U_n, V_1, \cdots, V_n$ are mutually independent Gaussian random variables such that $\theta\sim \mathcal{N}(0,1)$, $Z\sim \mathcal{N}(0,\rho)$, $U_j\sim \mathcal{N}(0,1-\rho)$, $V_j\sim \mathcal{N}(0,1-\rho)$. Thus, we need to estimate a parameter $\theta$ in the presence of a constant interference $Z$ and independent noises $U_i$ and $V_i$. Since $X_{1i},X_{2i}\sim N(\theta,\theta,1,1,\rho)$, given $\theta$ $\Xbf_1$ and $\Xbf_2$ are not independent conditioned on $\theta$.

Choose the hidden variable  $W=\theta+Z$. One can verify easily that $W$ satisfies the Markov chains $\theta-W-(\Xbf_1,\Xbf_2)$ and $\Xbf_1-W-\Xbf_2$ required by the HCI model. For Gaussian observations, it is also clear that $\sum_i{X_{1i}}$ and $\sum_i{X_{2i}}$ are locally sufficient for $W$. Therefore, from Corollary \ref{cor1}, $(\sum_iX_{1i},\sum_iX_{2i})$ is globally sufficient for $\theta$. 
\end{example}

While the above example is somewhat artificial, it does provide a clue for how to choose a meaningful hidden variable - often times, the signal model itself provides a natural choice of $\Wbf$ as in Example 2. Specifically, the signal plus interference term $\theta+Z$ satisfies both Markov chain conditions and turns out to be precisely the hidden variable that leads to meaningful data reduction. The next example is motivated by the cooperative spectrum sensing problem \cite{Peng:energydetection}. As with Example 2, the choice of $\Wbf$ can also be obtained by careful examination of the signal model.
\begin{example}\label{eg:cooperative}
Consider the hypothesis testing problem involving $K$ sensors with the two hypotheses under test given by
\begin{align}
H_i&: X_k=h_kS+N_k, i = 0,1,\nn
\end{align}
where $X_k$, $k=1,\cdots,K$,  is the observation at sensor $k$, $h_k$'s are circularly symmetric complex Gaussian and independent of each other and of other variables, $S$ is a signal taking values in $\mathcal{S}_0=\{s_0=0\}$ under $H_0$ and
$\mathcal{S}_1=\{s_m=r_me^{j\theta_m},m=1,\cdots,M\}$ with probability $p(S=s_m)=\pi_m$ under $H_1$,  and $N_k$ is the observation noise at the $k$th sensor which is circularly complex Gaussian distributed and is independent of each other. This hypothesis testing problem can be used to describe the baseband model of detecting the presence of a QAM signal in independent Rayleigh fading channels using $K$ sensors.

The observations are not conditionally independent under $H_1$ given that the observations contain a common random signal $S$.
Again, taking a Bayesian viewpoint where we assume that the true hypothesis $H$ is a binary random variable, then $H-S-(X_1,\cdots,X_K)$ form a Markov chain since the observations depend on the hypothesis only through the signal.
 It is easy to verify that the statistic $|S|$ is sufficient for $H$ given $S$.
Thus, the Markov chain $H-|S|-S-(X_1,\cdots,X_K)$ holds.
On the other hand, given $|S|$,  the observations are conditionally independent of each other under the independent Rayleigh fading assumption.
Therefore, $|S|$ serves as the hidden variable $\Wbf$ for the HCI model corresponding to this decentralized hypothesis testing problem.

For any $k$,
 $|X_k|$ is a minimal sufficient statistic for $|S|$. This can be easily verified by writing out the ratio $\frac{p(x_k||s|)}{p(x_k'||s|)}$ for
two sample points $x_k $ and $x_k'$.
Therefore, from Corollary \ref{cor1}, $\{|X_k|\}, k=1,\cdots, K$, are globally sufficient for $H$.


\end{example}

\section{Decentralized Data Reduction With Quantization Constraints}
\label{SC:Decen}
We now consider decentralized inference where quantization is required at each node. For simplicity and ease of presentation, we again assume a simple two-node system, as illustrated in Fig.~\ref{fig:decentralss}. The result extends to systems with more than two nodes in a straightforward manner.

Let $\theta\sim p(\theta)$ be the parameter of interest and $\Xbf_i$ the local observation at sensor $i$ with a likelihood function $p(\xbf_i|\theta)$, for $i=1,2$. Statistics and quantizers at local nodes, as well as the estimator at the fusion center are defined in a similar fashion as that in Section \ref{SC:Cen}. Let $d[\theta,\hat{\theta}]$ be the cost function where $\theta$ is the true parameter and $\hat{\theta}$ its estimate. The Bayesian risks for the systems in Fig. \ref{fig:decentralss1} and Fig. \ref{fig:decentralss2} are given respectively by

\bqa
\label{eqn:risks21}
&R=E\{d[\theta,h(U_1,U_2)]\}
\eqa
and
\bqa
\label{eqn:risks22}
&R'=E\{d[\theta,h'(U_1',U_2')]\},
\eqa
where $U_i=\gamma_i(\Xbf_i) \in \{0,\dots,L-1\}$ and $U_i'=\gamma_i'(T_i(\Xbf_i))\in \{0,\dots,L-1\}$ .

\begin{figure}
    \centering
    \subfigure[]
    {
      \psfrag{x1}[c][c]{\scriptsize$\mathbf{X}_1$}
        \psfrag{x2}[c][c]{\scriptsize$\mathbf{X}_2$}
        \psfrag{T1}[c][c]{\scriptsize$T_1(\Xbf_1)$}
        \psfrag{T2}[c][c]{\scriptsize$T_2(\Xbf_2)$}
        \psfrag{u1}[c][c]{\scriptsize$U_1$}
        \psfrag{u2}[c][c]{\scriptsize$U_2$}
        \psfrag{p}[c][c]{\footnotesize$p(\mathbf{x}|\theta)$}
        \psfrag{s1}[c][c]{\footnotesize$T_1(\cdot)$}
        \psfrag{s2}[c][c]{\footnotesize$T_2(\cdot)$}
        \psfrag{g1}[c][c]{\footnotesize$\gamma_1(\cdot)$}
        \psfrag{g2}[c][c]{\footnotesize$\gamma_2(\cdot)$}
        \psfrag{h}[c][c]{\footnotesize$h(\cdot)$}
        \psfrag{t}[c][c]{\footnotesize$\theta$}
        \psfrag{t2}[c][c]{\footnotesize$\hat{\theta}$}
        \scalefig{.35}\epsfbox{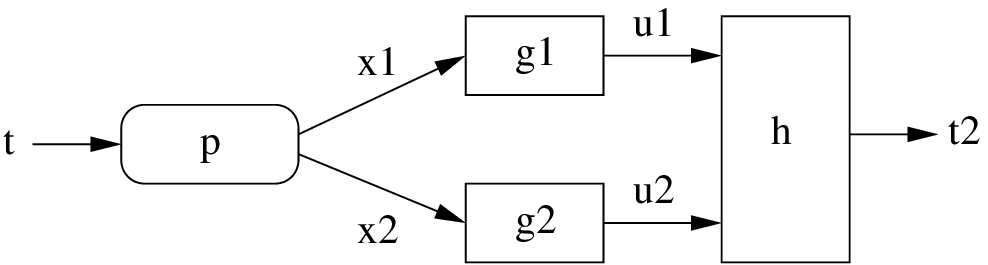}
        \label{fig:decentralss1}
    }
    \\
    \subfigure[]
    {
      \psfrag{x1}[c][c]{\scriptsize$\Xbf_1$}
        \psfrag{x2}[c][c]{\scriptsize$\Xbf_2$}
        \psfrag{T1}[c][c]{\scriptsize$T_1(\Xbf_1)$}
        \psfrag{T2}[c][c]{\scriptsize$T_2(\Xbf_2)$}
        \psfrag{u1}[c][c]{\scriptsize$U_1'$}
        \psfrag{u2}[c][c]{\scriptsize$U_2'$}
        \psfrag{p}[c][c]{\footnotesize$p(\mathbf{x}|\theta)$}
        \psfrag{s1}[c][c]{\footnotesize$T_1(\cdot)$}
        \psfrag{s2}[c][c]{\footnotesize$T_2(\cdot)$}
        \psfrag{g1}[c][c]{\footnotesize$\gamma_1'(\cdot)$}
        \psfrag{g2}[c][c]{\footnotesize$\gamma_2'(\cdot)$}
        \psfrag{h}[c][c]{\footnotesize$h'(\cdot)$}
        \psfrag{t}[c][c]{\footnotesize$\theta$}
        \psfrag{t2}[c][c]{\footnotesize$\hat{\theta}'$}
        \scalefig{.45}\epsfbox{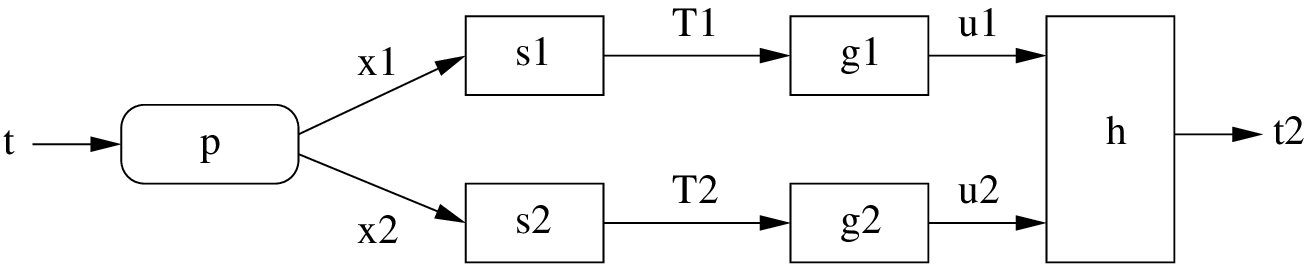}
        \label{fig:decentralss2}
    }
    \caption{Decentralized inference systems with quantizers operating on (a) the raw data $\mathbf{X}_i, i=1,2$, (b) the sufficient statistics $T_i(\Xbf_i), i=1,2$.}
    \label{fig:decentralss}
\end{figure}



The additional constraint that a quantizer is used at each sensor node may lead to inevitable information loss. As such, it is not clear whether global sufficient statistics based data reduction is still optimal. That is, even if $(T_1(\Xbf_1),T_2(\Xbf_2))$ form a global sufficient statistic, can the system in Fig.~\ref{fig:decentralss2} achieve the same performance as that of Fig.~\ref{fig:decentralss1}?

The answer, unfortunately, is {\em no}, as can be seen from the following simple example.
\begin{example}
\label{exp:counter suff}
Consider the degenerate case where $\Xbf_1=\Xbf_2$ and $U_i$ is constrained to be of one bit. Clearly $(T_1(\Xbf_1)=\Xbf_1,T_2(\Xbf_2)=\varnothing)$ is a global sufficient statistic. However it is trivial to see that quantizing such constructed $T_1(\Xbf_1)$ and $T_2(\Xbf_2)$ using $1$-bit each can be suboptimal compared with quantizing the data directly, with the former equivalent to a $1$-bit quantizer of the data whereas the latter a $2$-bit quantizer. Specifically, for the latter case, $\Xbf_1$ and $\Xbf_2$ are quantized at each node thus provide more information to the fusion center. 
\end{example}

An acute reader has probably realized that the above example involves data that are conditionally dependent given the parameter of interest. It turns out when data are conditionally independent given $\theta$, the answer is indeed the affirmative, i.e., quantizing sufficient statistics is structurally optimal.
\subsection{Conditionally Independent Observations}
\begin{theorem}
\label{thm:cond inde}
For the Bayesian risks in (\ref{eqn:risks21}) and (\ref{eqn:risks22}) when $\Xbf_1$ and $\Xbf_2$ are conditionally independent given $\theta$,
$$\min_{\gamma_1,\gamma_2,h}R=\min_{\gamma_1',\gamma_2',h'}R'.$$
\end{theorem}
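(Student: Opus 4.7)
The approach is to mirror the centralized argument of Theorem~\ref{thmcen} via a person-by-person optimality reduction, with the new ingredient being a Markov-chain replacement for Lemma~\ref{prob} in the decentralized setting. The easy direction $\min R' \geq \min R$ is immediate: any pair $(\gamma_1',\gamma_2')$ together with $h'$ yields a valid triple $(\gamma_1,\gamma_2,h)=(\gamma_1'\circ T_1,\gamma_2'\circ T_2,h')$ for Fig.~\ref{fig:decentralss1}, which cannot outperform $R_{\min}$.

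For the reverse direction, I would fix an optimal triple $(\gamma_1^*,\gamma_2^*,h^*)$ achieving $R_{\min}$ and show that each $\gamma_i^*$ can, without loss of optimality, be taken to depend on $\Xbf_i$ only through $T_i(\Xbf_i)$. Holding $\gamma_2^*$ and $h^*$ fixed and writing $U_2=\gamma_2^*(\Xbf_2)$, grouping the sum over $\xbf_2$ by the value of $U_2$ reduces the Bayesian risk to
\begin{align*}
R=\sum_{\xbf_1}\!\left[\sum_{u_2,\theta}\! d[\theta,h^*(\gamma_1(\xbf_1),u_2)]\,p(\theta,u_2|\xbf_1)\right]\!p(\xbf_1).
\end{align*}
The optimal $\gamma_1^*(\xbf_1)$ must therefore minimize the bracketed expression pointwise. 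If I can establish $p(\theta,u_2|\xbf_1)=p(\theta,u_2|T_1(\xbf_1))$, then $\gamma_1^*$ depends on $\xbf_1$ only through $T_1(\xbf_1)$ and can be realized by some $\gamma_1'(T_1(\xbf_1))$.

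The crux is thus to establish the Markov chain $\Xbf_1-T_1(\Xbf_1)-(\theta,U_2)$. I plan to derive this from three ingredients: (i) local sufficiency $\theta-T_1(\Xbf_1)-\Xbf_1$; (ii) conditional independence $\Xbf_1-\theta-\Xbf_2$; and (iii) $U_2$ being a function of $\Xbf_2$. Combining (ii) and (iii) and using that $T_1(\Xbf_1)$ is a function of $\Xbf_1$ gives $(\Xbf_1,T_1(\Xbf_1))-\theta-U_2$; the weak union rule then yields $\Xbf_1-(\theta,T_1(\Xbf_1))-U_2$. Applying the contraction rule to this together with (i) produces the desired chain. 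Once $\gamma_1^*$ has been replaced by $\gamma_1'\circ T_1$, swapping indices and rerunning the argument (now with $U_1'=\gamma_1'(T_1(\Xbf_1))$ playing the role that $U_2$ just played) gives $\Xbf_2-T_2(\Xbf_2)-(\theta,U_1')$ and permits the analogous reduction at sensor~2. Taking $h'=h^*$ then yields $R'=R_{\min}$.

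The main obstacle I anticipate is the clean derivation of the Markov chain in the simultaneous presence of both $U_2$ and $T_1(\Xbf_1)$: one must invoke weak union and contraction in the correct order, since any attempt via the intersection rule would require additional positivity assumptions on the joint distribution. Once this chain is in hand, the rest is a near-verbatim adaptation of the centralized Theorem~\ref{thmcen} argument, applied one sensor at a time; crucially, the reduction preserves the property that each local quantizer acts only on its own data, so the resulting $(\gamma_1',\gamma_2',h')$ is admissible for the system in Fig.~\ref{fig:decentralss2}.
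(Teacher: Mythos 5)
Your proposal is correct, and its overall architecture is the same as the paper's: the easy direction by composing $\gamma_i = \gamma_i'\circ T_i$, then a person-by-person optimality argument showing that each jointly optimal $\gamma_i^*$ can be taken to depend on $\Xbf_i$ only through $T_i(\Xbf_i)$. The difference lies in how the key invariance is established. The paper factorizes the joint density directly, writing $p(\xbf_1,\xbf_2,\theta)=p(\xbf_1)p(\theta|T_1(\xbf_1))p(\xbf_2|\theta)$ via conditional independence and Lemma~\ref{prob}, and then observes that the resulting cost-to-go $\alpha_1(u_1,\xbf_1)$ (an integral over $\theta$ and $\xbf_2$) depends on $\xbf_1$ only through $T_1(\xbf_1)$. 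You instead marginalize $\xbf_2$ down to $U_2$ first and prove the Markov chain $\Xbf_1-T_1(\Xbf_1)-(\theta,U_2)$ by the graphoid axioms (decomposition, weak union, contraction), which gives $p(\theta,u_2|\xbf_1)=p(\theta,u_2|T_1(\xbf_1))$. The two routes are equivalent in substance — the paper's factorization immediately implies your Markov chain — but your axiomatic derivation is arguably cleaner in that it isolates exactly which conditional-independence facts are used, and your remark that the contraction route avoids the positivity caveat attached to the intersection rule is a genuine point (the paper's Proposition~1, which does use intersection, must carry that caveat). The paper's density manipulation, on the other hand, produces the explicit form of $\alpha_1$ that is reused almost verbatim in Theorems~\ref{thm:cond de} and \ref{thm:general condition}. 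One shared loose end, which neither you nor the paper addresses and which is not a gap specific to your argument: pointwise minimization may have ties, so one should say that a minimizer can always be \emph{chosen} to depend only on $T_1(\xbf_1)$ without changing the risk.
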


Note that for conditionally independent observations, there is no need to distinguish between local and global sufficient statistics. We now establish Theorem 3 using the Bayesian risk for a two-sensor system

\begin{proof}
Let $$R_{\mathrm{min}}=\min_{\gamma_1,\gamma_2,h}R,$$ where the minimum Bayesian risk is achieved by the optimal quantizers $\gamma_i^*(\cdot)$ and estimator $h^*(\cdot)$. It is easy to see that Fig.~\ref{fig:decentralss2} can not achieve a better performance than $R_{\mathrm{min}}$ - for any given $T_i(\cdot)$ and $\gamma_i'(\cdot)$, one can simply define $\gamma_i(\Xbf_i)=\gamma'_i(T_i(\Xbf_i))$ whose performance is bounded by $R_{\mathrm{min}}$. Thus we only need to show that $R_{\mathrm{min}}$ can be achieved by Fig.~\ref{fig:decentralss2}, i.e., one can find $(\gamma_1'(\cdot),\gamma_2'(\cdot),h'(\cdot))$ that achieve $R_{\mathrm{min}}$ for the given sufficient statistics $T_1(\Xbf_1)$ and $T_2(\Xbf_2)$. Similar to the proof for the centralized case, it suffices to show that the optimal quantizers $\gamma_i^*(\mathbf{X}_i)$ achieving $R_\mathrm{min}$ depends on $\Xbf_i$ only through $T_i(\Xbf_i)$.

As $\mathbf{X}_1$ and $\mathbf{X}_2$ are conditionally independent,
\begin{align}
p(\mathbf{x}_1,\mathbf{x}_2,\theta)&=p(\theta)p(\mathbf{x}_1|\theta)p(\mathbf{x}_2|\theta)\nn\\
&=p(\mathbf{x}_1)p(\theta|\mathbf{x}_1)p(\mathbf{x}_2|\theta)\nn\\
&=p(\mathbf{x}_1)p(\theta|T_1(\mathbf{x}_1))p(\mathbf{x}_2|\theta).\nn
\end{align}
The last step comes from the fact that $T_1(\Xbf_1)$ is sufficent for the data $\Xbf_1$ and Lemma \ref{prob}. Expanding $R$ with respect to $\mathbf{X}_1$, we get
\begin{align}
R=&\int_{\theta}\int_{\mathbf{x}_1}\int_{\mathbf{x}_2}d[\theta,h(\gamma_1(\xbf_1),\gamma_2(\xbf_2))]p(\mathbf{x}_1,\mathbf{x}_2,\theta)d\mathbf{x}_2d\mathbf{x}_1 d\theta\nn\\
=&\int_{\theta}\int_{\mathbf{x}_1}\int_{\mathbf{x}_2}d[\theta,h(\gamma_1(\xbf_1),\gamma_2(\xbf_2))]p(\mathbf{x}_1)p(\theta|T_1(\mathbf{x}_1))\nn\\
 &\times p(\mathbf{x}_2|\theta)d\mathbf{x}_2d\mathbf{x}_1 d\theta\nn\\
\triangleq&§\int_{\mathbf{x}_1}\alpha_1(u_1,\mathbf{x}_1)p(\mathbf{x}_1)d\mathbf{x}_1,\nn
\end{align}
where
\bqa
\alpha_1(u_1,\mathbf{x}_1)\triangleq\int_{\mathbf{x}_2}\int_{\theta}d[\theta,h(u_1,u_2)]p(\theta|T_1(\mathbf{x}_1))p(\mathbf{x}_2|\theta)d\theta d\mathbf{x}_2.\nn
\eqa
Let $\gamma_2(\cdot)$ and $h(\cdot)$ take the form of the optimal $\gamma_2^*(\cdot)$ and $h^*(\cdot)$, $\gamma_1^*(\cdot)$ must be chosen such that the corresponding $\alpha_1(u_1,\mathbf{x}_1)$ is minimized. The condition for making $u_1=\gamma_1^*(\mathbf{x}_1)=s \in \{0,\dots,L-1\}$ given $\Xbf_1=\mathbf{x}_1$ is, for any $t\in\{0,\dots,L-1\}$,
\begin{align}
0\geq&~\alpha_1(s,\mathbf{x}_1)-\alpha_1(t,\mathbf{x}_1)\nn\\
=&~\int_{\mathbf{x}_2}\int_{\theta}\{d[\theta,h^*{(s,{\gamma_2^*(\mathbf{x}_2}))}]-d[\theta,h^*{(t,{\gamma_2^*(\mathbf{x}_2}))}]\}\nn\\
&\times p(\theta|T_1(\mathbf{x}_1))p(\mathbf{x}_2|\theta)d\theta d\mathbf{x}_2\nn,
\end{align}
which depends on $\Xbf_1$ only through $T_1(\mathbf{X}_1)$.

The optimal quantizer $\gamma_2^*(\cdot)$ at the second node, given that $\gamma_1(\cdot)$ and $h(\cdot)$ take the form of $\gamma_1^*(\cdot)$ and $h^*(\cdot)$, can be similarly shown to be a function of the sufficient statistic $T_2(\mathbf{X}_2)$. Thus we have established that both $\gamma_1^*(\cdot)$ and $\gamma_2^*(\cdot)$ can be equivalently expressed as functions of $T_1(\Xbf_1)$ and $T_2(\Xbf_2)$ respectively, i.e., there exist $\gamma_1'(\cdot)$ and $\gamma_2'(\cdot)$ such that
\begin{align}
\gamma_1'(T_1(\Xbf_1))&=\gamma_1^*(\Xbf_1),\\
\gamma_2'(T_2(\Xbf_2))&=\gamma_2^*(\Xbf_2).
\end{align}
Therefore, the above $\gamma_1'(\cdot)$ and $\gamma_2'(\cdot)$, together with $h'(\cdot)=h^*(\cdot)$, achieves $R_{\min}$ for Fig.~\ref{fig:decentralss}(b).
\end{proof}
The fact that likelihood ratio quantizer is optimal for decentralized detection with conditionally independent observations can be naturally derived from the above general result.
\begin{example}
Let $\theta \in \{0,1\}$ and its estimate $\hat{\theta} \in \{0,1\}$. The observations $\Xbf_1$ and $\Xbf_2$ are independent given $\theta$. Let $d(\cdot)$ take the form of $0-1$ cost, i.e., $d[\theta,\hat{\theta}]=0$ when $\theta=\hat{\theta}$ and $1$ otherwise. It is a trivial exercise to show $T_i(\xbf_i)={p(\xbf_i|\theta=1)\over{p(\xbf_i|\theta=0)}}$ is a sufficient statistic for $\theta$ with respect to $\Xbf_i$. Thus quantizing $T_i(\Xbf_i)$ is structurally optimal, which is consistent with \cite{Tsitsiklis:extremal} as the inference problem is precisely a hypothesis testing problem.
\end{example}

\subsection{Conditionally Dependent Observations}
While the previous section establishes the optimality of sufficiency based data reduction for conditionally independent observations even with quantization constraints, Example \ref{exp:counter suff} indicates that such is not the case with conditionally dependent observations. Nevertheless, in this section, we establish that within the problems involving dependent observations, there exist a class of problems such that quantizing sufficient statistics is still structurally optimal. Here we again utilize the HCI model \cite{Hchen:newframework}.
\begin{theorem}
\label{thm:cond de}
Let $\Wbf$ be a hidden variable such that (\ref{HCI}) is true. If $T_1({\Xbf_1})$ and $T_2({\Xbf_2})$ are local statistics that are sufficient with respect to $\mathbf{W}$, then quantizing $T_1(\Xbf_1)$ and $T_2(\Xbf_2)$ at the respective sensor is structurally optimal for the decentralized inference problem.
\end{theorem}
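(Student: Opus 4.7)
The plan is to mirror the proof of Theorem \ref{thm:cond inde}, with the hidden variable $\Wbf$ playing the role that $\theta$ played there. Set $R_\mathrm{min} = \min_{\gamma_1,\gamma_2,h} R$, and observe that $R' \geq R_\mathrm{min}$ is automatic, since any $(\gamma_1',\gamma_2',h')$ induces a valid raw-data rule via $\gamma_i(\Xbf_i) = \gamma_i'(T_i(\Xbf_i))$. It therefore suffices to show that an optimal quantizer $\gamma_i^*$ at each sensor can be realized as a function of $T_i(\Xbf_i)$.

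The crux is to obtain a factorization of $p(\xbf_1,\xbf_2,\theta)$ that takes the place of the identity $p(\xbf_1,\xbf_2,\theta) = p(\xbf_1)p(\theta|T_1(\xbf_1))p(\xbf_2|\theta)$ used in Theorem \ref{thm:cond inde}. The HCI Markov chains $\theta-\Wbf-(\Xbf_1,\Xbf_2)$ and $\Xbf_1-\Wbf-\Xbf_2$ together imply that $\theta$, $\Xbf_1$, and $\Xbf_2$ are mutually conditionally independent given $\Wbf$, so
\[
p(\theta,\xbf_1,\xbf_2) = \int p(w)\,p(\theta|w)\,p(\xbf_1|w)\,p(\xbf_2|w)\,dw.
\]
Local sufficiency $\Wbf-T_1(\Xbf_1)-\Xbf_1$ yields $p(w|\xbf_1) = p(w|T_1(\xbf_1))$, whence $p(\xbf_1|w) = p(\xbf_1)\,p(w|T_1(\xbf_1))/p(w)$. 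Substituting gives
\[
p(\theta,\xbf_1,\xbf_2) = p(\xbf_1)\int p(\theta|w)\,p(w|T_1(\xbf_1))\,p(\xbf_2|w)\,dw,
\]
which shows that $p(\theta,\xbf_2|\xbf_1)$ depends on $\xbf_1$ only through $T_1(\xbf_1)$; a symmetric statement holds for $T_2(\Xbf_2)$.

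With this factorization in place, the rest of the argument mirrors Theorem \ref{thm:cond inde}. Holding $(\gamma_2,h) = (\gamma_2^*,h^*)$ fixed and expanding $R$ with respect to $\xbf_1$ yields $R = \int p(\xbf_1)\,\alpha_1(\gamma_1(\xbf_1),\xbf_1)\,d\xbf_1$ with
\[
\alpha_1(u_1,\xbf_1) = \int\!\!\int d[\theta,h^*(u_1,\gamma_2^*(\xbf_2))]\,p(\theta,\xbf_2|T_1(\xbf_1))\,d\xbf_2\,d\theta,
\]
which depends on $\xbf_1$ only through $T_1(\xbf_1)$. The pointwise minimizer $\gamma_1^*(\xbf_1)$ can therefore be expressed as $\gamma_1'(T_1(\xbf_1))$ for some $\gamma_1'$. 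A symmetric argument at the second sensor produces $\gamma_2'(T_2(\Xbf_2))$, and taking $h'=h^*$ attains $R_\mathrm{min}$ in Fig.~\ref{fig:decentralss2}.

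The main obstacle is the Markov-chain bookkeeping in the middle paragraph, that is, combining the two HCI Markov chains with local sufficiency with respect to $\Wbf$ to conclude that $p(\theta,\xbf_2|\xbf_1)$ depends on $\xbf_1$ only through $T_1(\xbf_1)$, in a way that does not tacitly assume the stronger conditional independence $\Xbf_1\perp\Xbf_2\mid\theta$. Once this identity is secured, the remainder is essentially a translation of the conditionally independent case; indeed, Theorem \ref{thm:cond inde} is recovered from the degenerate choice $\Wbf = \theta$.
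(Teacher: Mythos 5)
Your proposal is correct and follows essentially the same route as the paper: the same reduction to showing that the person-by-person optimal quantizer $\gamma_1^*$ depends on $\xbf_1$ only through $T_1(\xbf_1)$, and the same key factorization $p(\theta,\xbf_1,\xbf_2)=p(\xbf_1)\int p(\theta|w)\,p(w|T_1(\xbf_1))\,p(\xbf_2|w)\,dw$, obtained exactly as in the paper by writing $p(\xbf_1|w)=p(w|\xbf_1)p(\xbf_1)/p(w)$ and invoking local sufficiency with respect to $\Wbf$ to replace $p(w|\xbf_1)$ by $p(w|T_1(\xbf_1))$. The Markov-chain step you flag as the main obstacle is handled the same way in the paper (via the two HCI chains and Lemma~\ref{prob}), so no gap remains.
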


Note that the first Markov chain in (\ref{HCI}) indicates that $\Xbf_1$ and $\Xbf_2$ are conditionally independent given $\Wbf$. If $T_1({\mathbf{X}_1})$ and $T_2({\mathbf{X}_2})$ are locally sufficient for $\mathbf{W}$, $(T_1({\mathbf{X}_1}), T_2({\mathbf{X}_2}))$ is globally sufficient for $\Wbf$ and hence for $\theta$ by Corollary \ref{cor1}.

\begin{proof}
Let $R_{\mathrm{min}}$ be the minimum Bayesian risk achieved by Fig.~2(a) with the corresponding optimal quantizers $\gamma_i^*(\cdot)$, $i=1, 2$, and estimator $h^*(\cdot)$. We show that $\gamma_i^*(\mathbf{X}_i)$ is necessarily a function of the sufficient statistic $T_i(\mathbf{X}_i)$.

Without loss of generality, we assume that $\mathbf{W}$ is continuous. From (\ref{HCI}), we have
\begin{align}
\label{eqn:decentralss joint prob}
p(\mathbf{x}_1,\mathbf{x}_2|\theta)&=\int_{\mathbf{w}}p(\mathbf{x}_1,\mathbf{x}_2,\mathbf{w}|\theta)d\mathbf{w}\nn\\
&=\int_{\mathbf{w}}p(\mathbf{x}_1|\mathbf{w})p(\mathbf{x}_2|\mathbf{w})p(\mathbf{w}|\theta)d\mathbf{w}\nn\\
&=\int_{\mathbf{w}}{{p(\mathbf{w}|\mathbf{x}_1)p(\mathbf{x}_1)}\over{p(\mathbf{w})}}p(\mathbf{x}_2|\mathbf{w})p(\mathbf{w}|\theta)d\mathbf{w}.\nn\\
&=p(\mathbf{x}_1)\int_{\mathbf{w}}{{p(\mathbf{w}|T_1(\mathbf{x}_1))}\over{p(\mathbf{w})}}p(\mathbf{x}_2|\mathbf{w})p(\mathbf{w}|\theta)d\mathbf{w}.
\end{align}
Expanding $R$ with respect to $\mathbf{X}_1$, we obtain
\begin{align}
R=& \int_{\theta}\int_{\mathbf{x}_1}\int_{\mathbf{x}_2}d[\theta,h(\gamma_1(\xbf_1),\gamma_2(\xbf_2))]p(\mathbf{x}_1,\mathbf{x}_2,\theta)d\mathbf{x}_2d\mathbf{x}_1d\theta\nn\\
=&\int_{\theta}\int_{\mathbf{x}_1}\int_{\mathbf{x}_2}\int_{\mathbf{y}}d[\theta,h(\gamma_1(\xbf_1),\gamma_2(\xbf_2))]{{p(\mathbf{w}|T_1(\mathbf{x}_1))}\over{p(\mathbf{w})}}p(\mathbf{x}_1)\nn\\
&\times p(\mathbf{x}_2|\mathbf{w})p(\mathbf{w}|\theta)d\mathbf{w}d\mathbf{x}_2d\mathbf{x}_1d\theta\nn\\
\triangleq& \int_{\mathbf{x}_1}\alpha'_1(u_1,\mathbf{x}_1)p(\mathbf{x}_1)d\mathbf{x}_1,\nn
\end{align}
where
\begin{align}
\label{eqn:alphaprime}
\alpha'_1(u_1,\mathbf{x}_1)\triangleq&\int_{\theta}\int_{\mathbf{x}_2}\int_{\mathbf{w}}d[\theta,h(u_1,u_2)]{{p(\mathbf{w}|T(\mathbf{x}_1))}\over{p(\mathbf{w})}}p(\mathbf{x}_2|\mathbf{y})\nn\\&
\times p(\mathbf{w}|\theta)d\mathbf{w}d\mathbf{x}_2d\theta.
\end{align}
Therefore, given $\gamma_2^*(\cdot)$ and $h^*(\cdot)$, for $\gamma_1^*(\cdot)$ to achieve $R_\mathrm{min}$, $u_1=\gamma_1^*({\mathbf{x}_1})$ must be such that $\alpha'_1(u_1,\mathbf{x}_1)$ is minimized, i.e., $u_1=s \in \{0,\dots,L-1\}$ if
\bqa
i =  \arg\min\limits_{u_1}\alpha'_1(u_1,\mathbf{x}_1).\nn
\eqa
From (\ref{eqn:alphaprime}), $\gamma_1^*(\Xbf_1)$ depends on $\Xbf_1$ only through $T_1(\mathbf{X}_1)$. Similar argument shows that $\gamma_2^*(\cdot)$ is also a function of the sufficient statistic $T_2(\mathbf{X}_2)$.
\end{proof}

The key to applying the above result also depends largely on a well chosen $\Wbf$ for the HCI model. As discussed in Section \ref{SC:exp}, the choice of $\Wbf$ can often be obtained by careful examination of the signal model. We now continue with Examples 2 and 3 by adding quantization constraints to the respective problems.

\begin{example}
Consider Example \ref{exp:globalsuff} under the quantization constraint, i.e., we need to estimate $\theta$ based on the quantized version of $\Xbf_1$ and $\Xbf_2$. Since $\sum_j{X_{1j}}$ and $\sum_j{X_{2j}}$ are locally sufficient for the hidden variable $W$, quantizing $\sum_j{X_{1j}}$ and $\sum_j{X_{2j}}$ is structurally optimal by Theorem \ref{thm:cond de}.
\end{example}
\begin{example}
Consider Example \ref{eg:cooperative} when quantization is needed at each node. In Example \ref{eg:cooperative} we have shown that $\{|X_k|\}, k=1,\cdots,K$ are globally sufficient for $H$. Therefore, from Theorem \ref{thm:cond de}, quantizing $|X_k|$ at the $k$th sensor is structurally optimal. This result is consistent with that in \cite{Peng:energydetection} which shows that the optimal detector at each local sensor is an energy detector for the corresponding cooperative spectrum sensing problem, i.e., in the form of a threshold test using $|X_k|^2$.
\end{example}

As a final technical note, the conditional independent case can be considered as a special case of Theorem 4. Specifically, setting $\Wbf=\theta$, one can see that Theorem 3 follows naturally from Theorem 4.
\subsection{An Alternative Condition for Structural Optimality}
\label{SC:generalcond}
Theorems 3 and 4 established the structural optimality of sufficiency based data reduction with independent data and with dependent data under a given HCI structure, respectively. In this section, we provide an alternative characterization that encompasses both cases. To proceed, we note that in Theorems 3 and 4 the joint distribution $p(\xbf_1,\xbf_2,\theta)$ can be expressed in both cases as the product of $p(\xbf_1)$ and a nonnegative function of $T_1(\xbf_1)$, $\xbf_2$ and $\theta$. We show that this factorization is indeed what is needed to establish that quantizing $T_1(\Xbf_1)$ achieves the same optimal inference performance as quantizing $\Xbf_1$ given that the optimal quantizer $\gamma_2^*(\cdot)$ and the optimal estimator $h^*(\cdot)$ are used at the second sensor and at the fusion center respectively.
\begin{theorem}
\label{thm:general condition}
If there exist two nonnegative functions $g(\cdot)$ and $h(\cdot)$ and a statistic $T_1(\Xbf_1)$ such that
\bqa
\label{eqn:generalcondi}
p(\xbf_1,\xbf_2,\theta)=g(\xbf_1)f(T_1(\xbf_1),\xbf_2,\theta),
\eqa
then quantizing $T_1(\Xbf_1)$ achieves the same optimal inference performance as quantizing $\Xbf_1$.
\end{theorem}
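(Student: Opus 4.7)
The plan is to adapt the template from the proofs of Theorems 3 and 4, exploiting the factorization in (\ref{eqn:generalcondi}) in place of the specific conditional-independence structures used there. First I would fix optimal quantizers $\gamma_1^*(\cdot),\gamma_2^*(\cdot)$ and optimal estimator $h^*(\cdot)$ that achieve the minimum Bayesian risk $R_\mathrm{min}$ for the system quantizing the raw data $\Xbf_1$. The ``easy'' direction, that quantizing $T_1(\Xbf_1)$ cannot do better than $R_\mathrm{min}$, is immediate: given any $\gamma_1'(\cdot)$ acting on $T_1(\Xbf_1)$, the composition $\gamma_1'(T_1(\cdot))$ is a valid quantizer on $\Xbf_1$, whose performance is bounded below by $R_\mathrm{min}$.

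For the nontrivial direction I would expand the Bayesian risk with respect to $\Xbf_1$, substituting the hypothesized factorization $p(\xbf_1,\xbf_2,\theta)=g(\xbf_1)f(T_1(\xbf_1),\xbf_2,\theta)$:
\begin{align}
R &= \int_{\xbf_1}\!\int_{\xbf_2}\!\int_\theta d[\theta,h(\gamma_1(\xbf_1),\gamma_2(\xbf_2))]\,g(\xbf_1)f(T_1(\xbf_1),\xbf_2,\theta)\,d\theta d\xbf_2 d\xbf_1 \nonumber\\
&\triangleq \int_{\xbf_1} g(\xbf_1)\,\beta(u_1,\xbf_1)\,d\xbf_1, \nonumber
\end{align}
where $u_1=\gamma_1(\xbf_1)$ and
\[
\beta(u_1,\xbf_1)\triangleq\int_{\xbf_2}\!\int_\theta d[\theta,h(u_1,\gamma_2(\xbf_2))]\,f(T_1(\xbf_1),\xbf_2,\theta)\,d\theta d\xbf_2.
\]
The essential observation is that, because $f$ depends on $\xbf_1$ only through $T_1(\xbf_1)$, so does $\beta(u_1,\xbf_1)$; the factor $g(\xbf_1)$ is non-negative and does not involve $u_1$, so it cannot affect the minimizing choice. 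Consequently, with $\gamma_2^*(\cdot)$ and $h^*(\cdot)$ fixed, the pointwise minimizer $u_1=\arg\min_{u_1}\beta(u_1,\xbf_1)$ can be realized by a quantizer of the form $\gamma_1'(T_1(\xbf_1))$, and coupling it with $\gamma_2^*$ and $h^*$ attains $R_\mathrm{min}$.

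I do not anticipate a serious obstacle: the argument is essentially the same bookkeeping used in Theorems 3 and 4, with the factorization hypothesis doing exactly the work that conditional independence (together with Lemma \ref{prob}) or the HCI decomposition (\ref{eqn:decentralss joint prob}) did in those proofs. The one point that warrants care is to make sure the rearrangement is justified in the non-discrete case (Fubini applies since the integrand is non-negative once we treat the cost $d[\cdot,\cdot]$ as non-negative, as is standard in Bayesian risk formulations), and to note explicitly that Theorem \ref{thm:general condition} only asserts structural optimality at sensor 1 under this asymmetric hypothesis; a symmetric condition in $\xbf_2$ would yield the full two-sided conclusion and specializes to Theorems 3 and 4 when the joint law factors via conditional independence or via a hidden variable $\Wbf$.
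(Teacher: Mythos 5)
Your argument is correct and follows essentially the same route as the paper: expand the Bayesian risk over $\Xbf_1$ using the factorization, observe that the resulting per-sample cost $\beta(u_1,\xbf_1)$ (the paper's $\alpha_1''$) depends on $\xbf_1$ only through $T_1(\xbf_1)$, and conclude that the optimal $\gamma_1^*$ can be realized as a function of $T_1(\Xbf_1)$. Your added remarks on the easy direction, Fubini, and the one-sidedness of the hypothesis are sound but not substantively different from the paper's proof.
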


From (\ref{eqn:generalcondi}), if we marginalize $\Xbf_2$ on both sides, we have
\bqa
p(\xbf_1,\theta)=g(\xbf_1)\int_{\xbf_2}f(T_1(\xbf_1),\xbf_2,\theta)d\xbf_2.\nn
\eqa
Thus, by the factorization theorem \cite{Casella:stata}, (\ref{eqn:generalcondi}) implies that $T_1(\Xbf_1)$ is a local sufficient statistic for $\theta$.

\begin{proof}
Let $R_\mathrm{min}$ be the minimum Bayesian risk achieved by Fig.~\ref{fig:decentralss1} with quantizer $\gamma_i^*(\cdot)$ and estimator $h^*(\cdot)$. We show that, if (\ref{eqn:generalcondi}) holds, then $\gamma_1^*(\Xbf_1)$ depends on $\Xbf_1$ only through the sufficient statistic $T_1(\Xbf_1)$.

Again, expanding $R$ with respect to $\Xbf_1$, we get
\begin{align}
R=&\int_{\theta}\int_{\xbf_1}\int_{\mathbf{x}_2}d[\theta,h(\gamma_1(\xbf_1),\gamma_2(\xbf_2)]p(\mathbf{x}_1,\mathbf{x}_2,\theta)d\mathbf{x}_2d\mathbf{x}_1d\theta\nn\\
=&\int_{\theta}\int_{\xbf_1}\int_{\mathbf{x}_2}d[\theta,h(\gamma_1(\xbf_1),\gamma_2(\xbf_2)]g(\mathbf{x}_1)f(T_1(\xbf_1),\xbf_2,\theta)\nn\\&\times d\mathbf{x}_2d\mathbf{x}_1d\theta\nn\\
\triangleq& \int_{\mathbf{x}_1}\alpha_1''(u_1,\mathbf{x}_1)g(\mathbf{x}_1)d\mathbf{x}_1,\nn
\end{align}
where
\begin{align}
\alpha_1''(u_1,\mathbf{x}_1)\triangleq&\int_{\theta}\int_{\mathbf{x}_2}d[\theta,h(u_1,\gamma_2(\xbf_2))]f(T_1(\xbf_1),\xbf_2,\theta)\nn\\&\times d\mathbf{x}_2d\theta.\nn
\end{align}
Given the optimal second quantizer $\gamma_2^*(\cdot)$ and estimator $h^*(\cdot)$, $\gamma_1^*(\cdot)$ must be such that it minimizes $\alpha_1''(u_1,\xbf_2)$, i.e., $u_1=\gamma_1^*(\xbf_1)=s \in \{0,\dots,L-1\}$ if for any $t\in \{0,\dots,L-1\}$,
\bqa
0\geq\alpha_1''(s,\xbf_1)-\alpha_1''(t,\xbf_1).\nn
\eqa
The proof is thus complete by recognizing that $\alpha_1''(u_1,\xbf_1)$ depends on $\xbf_1$ only through $T(\xbf_1)$.
\end{proof}

Theorem \ref{thm:general condition} provides an alternative way of formulating the sufficiency based data reduction, i.e., one may directly check the joint probability $p(\xbf_1,\xbf_2,\theta)$ instead of searching for a meaningful hidden variable $\Wbf$. While Theorem 5 appears to be more general than Theorem 4, we show in the following that these two theorems are indeed equivalent to each other in that they imply each other. This observation is also consistent with the fact that the conditional independence case can also be considered as a sepcial case of the HCI model.

\begin{proposition}
Thereoms 4 and 5 are equivalent.
\end{proposition}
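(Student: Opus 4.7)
Because Theorems 4 and 5 share the same structural-optimality conclusion, my plan is to reduce the equivalence to showing that each theorem's hypothesis can be manufactured from the other, so that each theorem is available as a tool to derive the other. Thus I would prove two implications, treating hypotheses as data and borrowing the already-established conclusion on each side.

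For Theorem 4 $\Rightarrow$ Theorem 5, I would start from the factorization $p(\xbf_1,\xbf_2,\theta)=g(\xbf_1)f(T_1(\xbf_1),\xbf_2,\theta)$ and construct an HCI variable by taking $\Wbf=(\Xbf_2,\theta)$ together with $T_2(\Xbf_2)=\Xbf_2$. Both Markov chains in (\ref{HCI}) then hold trivially because $\Wbf$ deterministically contains both $\theta$ and $\Xbf_2$, and $T_2=\Xbf_2$ is trivially locally sufficient for $\Wbf$. The only substantive step is verifying that $T_1(\Xbf_1)$ is locally sufficient for $\Wbf$, i.e.\ that $\Xbf_1-T_1(\Xbf_1)-(\Xbf_2,\theta)$ is a Markov chain. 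Dividing the factorization by its marginal over the level set of $T_1$ should yield $p(\xbf_1\mid T_1(\xbf_1),\xbf_2,\theta)=g(\xbf_1)/G(T_1(\xbf_1))$ with $G(t)=\int_{T_1^{-1}(t)}g(\tilde\xbf_1)\,d\tilde\xbf_1$, a quantity that depends on $\xbf_1$ only through $T_1(\xbf_1)$. Theorem 4 then delivers exactly the conclusion of Theorem 5.

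For Theorem 5 $\Rightarrow$ Theorem 4, I would assume the HCI hypothesis of Theorem 4 and derive the factorization that Theorem 5 requires. Using both Markov chains in (\ref{HCI}), the joint expands as $p(\xbf_1,\xbf_2,\theta)=p(\theta)\int p(\wbf\mid\theta)\,p(\xbf_1\mid\wbf)\,p(\xbf_2\mid\wbf)\,d\wbf$. Applying the Neyman-Fisher theorem (\ref{NF factor}) to $p(\xbf_1\mid\wbf)$, viewing $\Wbf$ as the parameter and exploiting the local sufficiency of $T_1$ for $\Wbf$, gives $p(\xbf_1\mid\wbf)=g_1(\xbf_1)k_1(T_1(\xbf_1),\wbf)$; pulling $g_1(\xbf_1)$ outside the $\wbf$-integral produces the factorization $p(\xbf_1,\xbf_2,\theta)=g_1(\xbf_1)f_1(T_1(\xbf_1),\xbf_2,\theta)$ required by Theorem 5. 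Invoking Theorem 5 yields structural optimality at sensor 1, and the symmetric argument at sensor 2 completes Theorem 4.

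The delicate direction is the first: one must produce a legitimate HCI variable from a mere factorization hypothesis. The choice $\Wbf=(\Xbf_2,\theta)$ sidesteps this obstacle by trivializing the two HCI Markov chains and reducing the burden to the local sufficiency of $T_1$ alone, which the factorization supplies through the level-set quotient above. This exploits the flexibility noted after Corollary \ref{cor1}, namely that the HCI framework imposes no restriction on $\Wbf$ beyond the two required Markov chains.
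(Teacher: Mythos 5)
Your proposal is correct, and its overall architecture coincides with the paper's: each theorem's hypothesis is manufactured from the other's, so that each theorem can be invoked to deliver the other's conclusion. The trivial direction (HCI hypothesis $\Rightarrow$ factorization, via Neyman--Fisher applied to $p(\xbf_1|\wbf)$) is essentially identical to the paper's, which obtains the same factorization by rewriting $p(\xbf_1|\wbf)$ with Bayes' rule and Lemma~\ref{prob}. The only genuine difference is in the nontrivial direction: where you take the hidden variable to be $\Wbf=(\Xbf_2,\theta)$, the paper takes $\Wbf=(T_1(\Xbf_1),\Xbf_2)$. Your choice trivializes both HCI Markov chains (since $\Wbf$ deterministically contains $\theta$ and $\Xbf_2$) and concentrates all the work in the local sufficiency of $T_1(\Xbf_1)$ for $\Wbf$, which you correctly extract from the factorization by the level-set quotient $p(\xbf_1\mid T_1(\xbf_1),\xbf_2,\theta)=g(\xbf_1)/G(T_1(\xbf_1))$ --- the standard factorization-theorem computation. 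The paper's choice instead derives the chain $\theta-(T_1(\Xbf_1),\Xbf_2)-(\Xbf_1,\Xbf_2)$ directly from (\ref{eqn:generalcondi}) and obtains $\Xbf_1-T_1(\Xbf_1)-\Xbf_2$ by marginalizing $\theta$, which then yields local sufficiency of $T_1(\Xbf_1)$ for $(T_1(\Xbf_1),\Xbf_2)$. Both constructions implicitly set $T_2(\Xbf_2)=\Xbf_2$ (trivially locally sufficient for the respective $\Wbf$) so that Theorem~\ref{thm:cond de}'s conclusion reduces exactly to that of Theorem~\ref{thm:general condition}. Your variant is arguably slightly cleaner in isolating the single substantive verification; the paper's variant has the cosmetic advantage that its hidden variable does not contain $\theta$ itself, which makes it look more like the ``meaningful'' hidden variables used elsewhere in the paper, but as you note (and as the discussion after Corollary~\ref{cor1} confirms) the HCI framework imposes no such restriction.
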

\begin{proof}
The direction that Theorem 5 implies Theorem 4 is trivial as (\ref{eqn:decentralss joint prob}) satisfies (\ref{eqn:generalcondi}). We now show the other direction. Notice that given (\ref{eqn:generalcondi}), $\theta-(T_1(\Xbf_1),\Xbf_2)-(\Xbf_1,\Xbf_2)$ form a Markov chain and 
\begin{align}
p(\xbf_1,\xbf_2)=&~\int_\theta p(\xbf_1,\xbf_2,\theta)d\theta\nn\\
=&~\int_\theta g(\xbf_1)f(T_1(\xbf_1),\xbf_2, \theta) d\theta\nn\\
=&~g(\xbf_1)\int_\theta f(T_1(\xbf_1),\xbf_2, \theta) d\theta\nn,
\end{align}
which shows that $\Xbf_1-T_1(\Xbf_1)-\Xbf_2$ and hence $\Xbf_1-(T_1(\Xbf_1),\Xbf_2)-\Xbf_2$ are two Markov chains. Combining $\theta-(T_1(\Xbf_1),\Xbf_2)-(\Xbf_1,\Xbf_2)$ and $\Xbf_1-(T_1(\Xbf_1),\Xbf_2)-\Xbf_2$, we can choose $(T_1(\Xbf_1),\Xbf_2)$ as our hidden variable in the HCI model. That $\Xbf_1-T_1(\Xbf_1)-\Xbf_2$ is a Markov chain also implies that $(T_1(\Xbf_1),\Xbf_2)-T_1(\Xbf_1)-\Xbf_1$ form a Markov chain. Then $T_1(\Xbf_1)$ is a sufficient statistic for $(T_1(\Xbf_1),\Xbf_2)$ with respect to $\Xbf_1$ and achieves the structural optimality by Theorem 4.
\end{proof}

The fact that (\ref{eqn:generalcondi}) implies that $T_1(\Xbf_1)$ is a sufficient statistic for $\Xbf_1$ does not mean $T_1(\Xbf_1)$ being a sufficient statistic is a necessary condition for optimality. This is because (\ref{eqn:generalcondi}) itself is only a sufficient condition for optimality. Given below is a trivial example illustrating that a local statistic which achieves optimality is not necessarily a sufficient statistic.
\begin{example}
For $i=1,\cdots,n$, let
\bqa
X_{1i}&=&\theta+W_i,\nn\\
X_{2i}&=&\theta+V_i,\nn
\eqa
where $\theta, W_1, \cdots, W_n, V_1, \cdots, V_n$ are mutually independent Gaussian random variables such that $\theta\sim \mathcal{N}(0,1)$, $W_j\sim \mathcal{N}(0,1)$, $V_j\sim \mathcal{N}(0,1)$. Then $\Xbf_1$ and $\Xbf_2$ are conditionally independent given $\theta$. It is also clear that $\sum_i{X_{1i}}$ and $\sum_i{X_{2i}}$ are locally sufficient for $\theta$, thus quantizing $\sum_i{X_{1i}}$ and $\sum_i{X_{2i}}$ can achieve the optimal inference with corresponding quantizers $\gamma_1^*(\cdot)$ and $\gamma_2^*(\cdot)$ and the optimal estimator $h^*(\cdot)$.

Now consider another local statistic $U(\Xbf_1)=\gamma_1^*(\sum_i{X_{1i}})\in\{0,1\}$. If we quantize this statistic instead of $\sum_i{X_{1i}}$ at the first node while using $\gamma_2^*(\cdot)$ at the second node and $h^*(\cdot)$ at the fusion center, the optimal inference is also guaranteed, although the corresponding quantize is for $U(\Xbf_1)$ is a degenerate one, i.e., an identity mapping. It is trivial to see that $U(\Xbf_1)$ is not a sufficient statistic for $\theta$.
\end{example}

\section{Conclusion and Discussions}
\label{conclusion}
In this paper we have extended the sufficiency principle for decentralized data reduction with dependent observations where data reduction needs to be done locally at distributed sensors. For the conditional independence case, local sufficiency and global sufficiency imply each other. For the dependent observations, however, there is no definitive connection between the two notions of sufficiency in general. Using a recently proposed HCI model, we establish conditions under which local sufficiency implies global sufficiency with dependent observations.

A more interesting, and practically more important question is the study of decentralized data reduction when each sensor is subject to a quantization constraint. We do not address explicit quantizer design in this work; instead, we find sufficient conditions such that a separation approach, namely data reduction followed by a quantizer, is structurally optimal under the Bayesian inference framework for both centralized inference and decentralized inference with conditionally independent observations. For decentralized inference with conditionally dependent observations, quantizing sufficient statistics, even global ones, need not be optimal. Nevertheless, utilizing the HCI model, we have provided a suitable way of finding optimal data reduction if it exists. 


While Theorem 4 helps identify cases where meaningful data reduction can be achieved for dependent observations, identifying suitable hidden variable $\Wbf$ often requires careful examination and a good insight into the signal model. An alternative, yet equivalent formulation was provided in Theorem 5 which will be useful when a closed-form likelihood function of all data can be obtained. 

There are still cases where the existing tools developed in the present paper are not sufficient. We use the degenerate signal model in Example 4 to illustrate this point. Recall that for the case with $\Xbf_1=\Xbf_2$ and $1$-bit quantizer at each node, the optimal decentralized quantization is equivalent to a $2$-bit quantization of the observation in a centralized inference system. Clearly, the optimum data reduction would be to find the minimum sufficient statistic [2] prior to quantization. As the minimum sufficient statistic is a function of any other sufficient statistic, it is apparent that any locally sufficient statistic pair $(T_1(\Xbf_1),T_2(\Xbf_2))$ retains the optimal inference performance.

The above argument can also be made more rigorous by expanding the Bayesian risk. Let $T_1(\Xbf_1)$ be a sufficient statistic for $\theta$ with respect to $\Xbf_1$. When $\Xbf_1=\Xbf_2$,
\begin{align}
\label{eqn: example thm6}
p(\xbf_1,\xbf_2,\theta)=&~p(\xbf_1,\theta)\delta(\xbf_1-\xbf_2)\nn\\
\stackrel{(a)}{=}&~p(\xbf_1)p(\theta|T_1(\xbf_1))\delta(\xbf_1-\xbf_2),
\end{align}
where $\delta(\cdot)$ is the Dirac delta function and $(a)$ is from Lemma \ref{prob}. Then we have
\begin{align}
R=&~\int_{\theta}\int_{\xbf_1}\int_{\xbf_2}d[\theta,h(\gamma_1(\xbf_1),\gamma_2(\xbf_2))]p(\xbf_1,\xbf_2,\theta)d\xbf_1d\xbf_2d\theta\nn\\
=&~\int_{\theta}\int_{\xbf_1}\int_{\xbf_2}d[\theta,h(\gamma_1(\xbf_1),\gamma_2(\xbf_2))]p(\xbf_1)p(\theta|T_1(\xbf_1))\nn\\
&~\times \delta(\xbf_1-\xbf_2)d\xbf_1d\xbf_2d\theta\nn\\
=&~\int_{\theta}\int_{\xbf_1}d[\theta,h(\gamma_1(\xbf_1),\gamma_2(\xbf_1))]p(\xbf_1)p(\theta|T_1(\xbf_1))d\xbf_1d\theta\nn\\
\triangleq&\int_{\xbf_1}\alpha_1(u_1,\xbf_1)p(\xbf_1)d\xbf_1\nn,
\end{align}
where 
\begin{align}
u_1\triangleq&~\gamma_1(\xbf_1),\nn\\
\alpha_1(u_1,\xbf_1)\triangleq&\int_{\theta}d[\theta,h(\gamma_1(\xbf_1),\gamma_2(\xbf_1))]p(\theta|T_1(\xbf_1))d\theta\nn.
\end{align}
Given the optimal quantizer $\gamma_2^*(\cdot)$ and estimator $h^*(\cdot)$, we see that $\alpha_1(u_1,\xbf_1)$ depends on $\xbf_1$ only through $T_1(\xbf_1)$. The same argument shows that $\gamma_2^*(\cdot)$ is a function of the sufficient statistic $T_2(\Xbf_2)$ at the second node. Thus any local sufficient statistics $(T_1(\Xbf_1),T_2(\Xbf_2))$ can be used to achieve the same optimal inference performance. Note that while any local sufficient statistics $(T_1(\Xbf_1),T_2(\Xbf_2))$ preserve the optimal inference performance for this degraded observation model, they may not achieve the same degree of data reduction as that of the minimal sufficient statistic.

However, it is clear that Theorems 4 and 5 do not apply to this example, as the joint probability (\ref{eqn: example thm6}) can not be formulated in the form of  (\ref{eqn:generalcondi}). Searching for more general conditions to ensure the structural optimality of data reduction in the presence of quantization constraint will be our future work.

\section*{Acknowledgment}
The authors would like to thank Pengfei Yang for many helpful discussions and in particular for helping establish the equivalence of Theorem 4 and Theorem 5. This work is supported in part by National Science Foundation under Award CCF1218289, by Army Research Office under Award W911NF-12-1-0383, and by Air Force Office of Scientific Research under Award FA9550-10-1-0458. 

\bibliographystyle{IEEEbib}

\end{document}